\newtheorem{theorem}{Theorem}
\newtheorem{corollary}{Corollary} 
\theoremstyle{definition}
\begin{document}

\title{Finding Reproduction Numbers for\\ Epidemic Models \& Predator\textendash{}Prey Models of \\ Arbitrary Finite Dimension Using \\ The Generalized Linear Chain Trick}

\author{
    Hurtado, Paul J. \\
	University of Nevada, Reno \\
	ORCID: 0000-0002-8499-5986 \\
	\texttt{phurtado@unr.edu} 
	\And
	Richards, Cameron\\
	University of Nevada, Reno\\ 
	ORCID: 0000-0002-1620-9998
}


\maketitle

\begin{abstract} Reproduction numbers, like the basic reproduction number $\mathcal{R}_0$, play an important role in the analysis and application of dynamic models, including contagion models and ecological population models. One difficulty in deriving these quantities is that they must be computed on a model-by-model basis, since it is typically impractical to obtain general reproduction number expressions applicable to a family of related models, especially if these are of different dimensions. For example, this is typically the case for SIR-type infectious disease models derived using the linear chain trick (LCT). Here we show how to find general reproduction number expressions for such models families (which vary in their number of state variables) using the next generation operator approach in conjunction with the generalized linear chain trick (GLCT). We further show how the GLCT enables modelers to draw insights from these results by leveraging theory and intuition from continuous time Markov chains (CTMCs) and their absorption time distributions (i.e., phase-type probability distributions). To do this, we first review the GLCT and other connections between mean-field ODE model assumptions, CTMCs, and phase-type distributions. We then apply this technique to find reproduction numbers for two sets of models: a family of generalized SEIRS models of arbitrary finite dimension, and a generalized family of finite dimensional predator-prey (Rosenzweig-MacArthur type) models. These results highlight the utility of the GLCT for the derivation and analysis of mean field ODE models, especially when used in conjunction with theory from CTMCs and their associated phase-type distributions.	
\end{abstract}

~\\
\keywords{basic reproductive ratio; gamma chain trick; phase-type distribution; Coxian distribution; Erlang distribution; SIR; consumer-resource}

\clearpage
\tableofcontents
\clearpage


\section{Introduction} \label{sec:intro} 

The basic reproduction number $\mathcal{R}_0$ is perhaps the most well known threshold quantity derived from epidemic disease models \citep{Diekmann1990, Hethcote2000, Hyman2000, Hyman2005, vdDW2002, Heffernan2005, Roberts2012, Heesterbeek2002, MathEpi2008, BrauerCCC2011, Diekmann2000, Kermack1927, Kermack1932, Kermack1933, Kermack1991a, Kermack1991b, Kermack1991c,Diekmann2009,Dietz1993}. Along with its time-varying counterpart, the effective reproduction number $\mathcal{R}_t$ \citep[also called the replacement number;][]{Hethcote1976,Hethcote2000}, $\mathcal{R}_0$ plays a central role in how we understand and attempt to control the spread of infectious diseases, since these quantities summarize how the overall transmission process reflects the combined impact of various biological processes, e.g., host susceptibility, infectiousness, recovery, between-host contact processes, and control measures such as vaccination and quarantine. Such reproduction numbers can also be derived for models of other kinds of contagion and other biological populations, e.g., multispecies ecological models \citep{Hilker2008,Hurtado2014,Roberts2012,Duffy2016}, cell-level models of cancer \citep[e.g.,][]{Eftimie2011}, and viral infection \citep[e.g.,][]{Hews2009}. 

Reproduction number expressions are, however, only as good as the model assumptions from which they are derived. For example, the importance of incorporating non-exponentially distributed latent and infectious periods in SEIR-type models (see section \ref{sec:SEIRmodels} below) is well known \citep{Wearing2005}, and so is the potential importance of incorporating age structure and maturation delays into ecological population models \citep{Xia2009,Smith1974,Wilmers2007,Cushing1982,Hastings1983,Wang2019,Levine1983}.

The standard linear chain trick \citep[LCT;][]{Smith2010,Hurtado2019,MacDonald1978,Metz1991} has been widely used for decades as a way of replacing the assumption of exponentially distributed dwell times with Erlang distributions (i.e., gamma distributions with integer shape parameters) in ODE models. This technique yields a system of ODEs whose dimension depends on Erlang distribution shape parameter value(s), which determine(s) the number of model state variables and their corresponding equations. Deriving a new model using the LCT, therefore, defines a countably large family of new models, each of a different (finite) dimension. One challenge with this approach is that it has proven difficult to derive a single, general formula for the basic reproduction numbers for such model families. Instead, reproduction number expressions are typically derived on a case-by-case basis using a single, fixed shape parameter value (although, see \citet{Hyman2005,Bonzi2010}). Thus, there is a need for improving the available techniques for finding reproduction number expressions so that they are more suitable for such applications. As we show below, the generalized linear chain trick \citep[GLCT;][]{Hurtado2019} can help meet this challenge. 

To illustrate how, here we give a brief overview of the GLCT and related concepts from Markov chain theory, and then introduce and analyze two families of models: a generalized SEIRS model family that encompasses multiple SEIRS-type models as special cases, and a family of generalized Rosenzweig-MacArthur type predator-prey (consumer-resource) models with a stage-structured predator population. We show how the next generation operator approach \citep{vdDW2002,Diekmann1990,Diekmann2000,Heesterbeek2002,Diekmann2009,Roberts2012} can be used in conjunction with the GLCT to derive a general expression for the predator reproduction number $\mathcal{R}_{pred}$ in the predator-prey models, and to derive a general expression for the basic reproduction number $\mathcal{R}_0$ that holds for all instances of the general SEIRS model. Importantly, these results hold regardless of the dimension of those models so long as they remain finite. 

Key to the success of this approach is that 1) we conduct this analysis on a matrix-vector form of the model, as in \citet{Hyman2005}, and 2) because we use GLCT-based matrix-vector model formulations we can interpret the resulting reproduction number expressions through the lens of continuous time Markov chains (CTMCs), related stochastic processes, and \textit{phase-type distributions} \textendash{} the broad family of hitting time (or absorption time) distributions for absorbing CTMCs, which includes exponential, generalized Erlang, and Coxian distributions \citep{Bladt2017,Bladt2017ch3,Reinecke2012a,BuTools2}. 

The sections below are organized as follows. We first review the GLCT, LCT, and other connections between mean field ODE models, Markov chain theory, and phase-type distributions. We then review some standard SEIRS models and their basic reproduction numbers ($\mathcal{R}_0$), and introduce our generalized SEIRS and predator-prey models before deriving general reproduction number expressions for both.

\section{Background}

\subsection{CTMCs \& Phase-type distributions} 

In this section, we give a brief overview of the phase-type family of univariate, matrix exponential probability distributions, and their connection to continuous time Markov chains (CTMCs)\footnote{For readers only familiar with discrete time Markov chains, CTMCs are very similar, except that only state transitions to a different state are considered (so, transition probabilities $p_{ii}=0$, $i=1\ldots,k$), and rather than state transitions occurring after a fixed time step, they occur after an exponentially distributed duration of time (i.e., an exponentially distributed \textit{dwell time}), where the time spent in the $i^\text{th}$ state is exponential with its own rate $r_i>0$. CTMCs can be parameterized with a transition probability matrix, just like a discrete time Markov chain, and also a rate vector of $r_i$ values. These quantities are more commonly combined into a \textit{transition rate matrix}, like eq. \eqref{eq:ratematrix}, with $-r_i$ (net loss rate from the $i^\text{th}$ state) along the diagonal, and off-diagonal entries (row $i$ column $j$) of the form $r_i\,p_{ij}$ where $p_{ij}$ is the transition probability from the $i^\text{th}$ state to the $j^\text{th}$ state, and thus  $r_i\,p_{ij}$ is the \textit{rate} at which individuals move from the $i^\text{th}$ state into the $j^\text{th}$ state.\label{footnote:ratematrix}}, based on \citet{Hurtado2019} and \citet{HurtadoRichards2021,HurtadoRichards2020b}. Some familiar examples of distributions in the phase-type family are the exponential distribution, Erlang distribution (i.e., gamma distributions with integer shape parameters $k$), hypoexponential (or generalized Erlang) distribution, hyper-Erlang (finite Erlang mixture) distribution, and the Coxian distribution. Various statistical tools exist for fitting these phase-type distributions to data \citep[e.g.,][]{BuTools2,BuToolswww}. For more details, see \citet{Bladt2017,Reinecke2012a, Reinecke2012b, Horvath2012, Horvath2016, Altiok1985}.

Phase-type distributions describe the time it takes to first reach an absorbing state in a CTMC. A given phase-type distribution is parameterized by a matrix $\mathbf{A}$ and column vector $\boldsymbol{\alpha}$, as detailed in eqs. \eqref{eq:pt} below. These quantities define a corresponding CTMC with $k$ transient states and one absorbing state, as follows. This CTMC has an initial state distribution vector $[\alpha_1,\ldots,\alpha_k,\alpha_*]^\text{T}$ determined by \begin{equation}\label{eq:alpha}\boldsymbol{\alpha}=[\alpha_1,\ldots,\alpha_k]^\text{T}\end{equation} 
(the initial distribution across transient states) which determines the fraction that immediately enter the absorbing state, $\alpha_* = 1- \sum_{i=1}^k\alpha_i$.  The $(k+1) \times (k+1)$ \textit{transition rate matrix}\footnotemark[\value{footnote}] is given by  \begin{equation}\label{eq:ratematrix}\begin{bmatrix} \mathbf{A} & \mathbf{a} \\ \mathbf{0} & 0 \end{bmatrix}\end{equation} where $\mathbf{A}$ is the $k\times k$ transient state block of the rate matrix, $\mathbf{a}$ is a length $k$ column vector, $\mathbf{0}$ is a length $k$ row vector of zeros.  

The negative values along the diagonal\footnotemark[\value{footnote}] of $\mathbf{A}$ are the loss rates from the corresponding transient states to any other state, and the non-negative off diagonal entries\footnotemark[\value{footnote}] of $\mathbf{A}$ are the rates of influx into each transient state from the other transient states \citep{Bladt2017,Hurtado2019, HurtadoRichards2021, HurtadoRichards2020b}. The elements of vector $\mathbf{a}$ in eq. \eqref{eq:ratematrix} are the loss rates from each transient state to the absorbing state. Since the rows in a transition rate matrix like eq. \eqref{eq:ratematrix} must sum to zero, this vector can be written in terms of $\mathbf{A}$ as \begin{equation}\mathbf{a}=-\mathbf{A\,1}\label{eq:a}\end{equation} 
where (here, and below) $\mathbf{1}$ denotes an appropriately long column vector of ones. Therefore, the CTMC corresponding to a given phase-type distribution is fully determined by $\boldsymbol{\alpha}$ and $\mathbf{A}$.

Phase-type distributions have a probability density function $f(t)$, cumulative density function (CDF) $F(t)$, Laplace-Stieltjes Transform of the CDF $\mathcal{L}(s)$, and $j^\text{th}$ moment  $E(Y^j)$ given, respectively, by eqs. \eqref{eq:pt} below, which are adapted from \citet{Reinecke2012a}: 

\begin{subequations} \label{eq:pt} \begin{align}
		f(t) =&\; \boldsymbol{\alpha}^\text{T}\,e^{\mathbf{A}t}\,(-\mathbf{A}\mathbf{1}) \\
		F(t) =&\; 1 - \boldsymbol{\alpha}^\text{T}\,e^{\mathbf{A}t}\,\mathbf{1} \\
		\mathcal{L}(s) =&\; \alpha_* + \boldsymbol\alpha^\text{T}\,(s\mathbf{I}-\mathbf{A})^{-1}(-\mathbf{A\,1})\\
		E(Y^j)=&\; j!\,\boldsymbol{\alpha}^\text{T}\,(-\mathbf{A})^{-j}\mathbf{1}. 
\end{align} \end{subequations}

Superscript $^\text{T}$ denotes the matrix transpose. As above, $\alpha_*$ is the probability of starting the CTMC in the absorbing state (typically $\alpha_*=0$), $\mathbf{1}$ is a column vector of ones, and $\mathbf{I}$ is the identity matrix, where each has appropriate dimensions given $\mathbf{A}$. If a phase-type distribution with parameters $\boldsymbol{\alpha}$ and $\mathbf{A}$ has $\alpha_*>0$, then it can be thought of as the zero-inflated mixture distribution of the phase-type distribution with parameters $\boldsymbol{\alpha}/(1-\alpha_*)$ and $\mathbf{A}$, and a Dirac delta distribution (point mass w.p. 1) at 0 with respective mixing probabilities $1-\alpha_*$ and $\alpha_*$. Note that a given phase-type distribution does not necessarily have a unique parameterization.

One important family of phase-type distributions are the Erlang distributions. These are the gamma distributions with integer shape parameters. More specifically, the sum of $k$ \textit{i.i.d.} exponential random variables with rate $r$ is Erlang distributed with rate $r$ and shape $k$. Erlang distributions can also be parameterized by their mean $\tau$ and variance $\sigma^2$ (or coefficient of variation $c_v=\sigma/\tau$): \begin{equation}
\tau=\frac{k}r,\quad \sigma^2 = \frac{k}{r^2}, \quad c_v = \frac{1}{\sqrt{k}}, \quad \text{and thus,} \quad k=\frac{\tau^2}{\sigma^2}=\frac{1}{c_v^{\;2}},\quad \text{and} \quad r=\frac{\tau}{\sigma^2}=\frac{1}{c_v^{\;2}\tau}=\frac{k}{\tau}.
\end{equation} 

The generalized Erlang distributions (also called hypoexponential distributions) are equivalent to the sums of $k$ independent exponential distributions, which may have distinct rates $r_i$, $i=1,\ldots,k$. 

Another important family of phase-type distributions are the Coxian distributions.  These are the absorption time distributions for CTMCs where each of $k$ transient states has its own rate $r_i$ (similar to generalized Erlang distributions) but for each state there is some probability $p_i$ of entering the next state in the ``chain'', or (with probability $1-p_i$) transitioning straight to the absorbing state (see Appendix \ref{A:minExpErlang} for an example). Phase-type distributions can be classified as \textit{acyclic} (transient states cannot be revisited once left) and \textit{cyclic} (one or more transient states can be revisited multiple times due to cycles in the transition rate matrix), and all acyclic phase-type distributions have Coxian representations \citep{Cumani1982,OCinneide1991,OCinneide1993}.

\subsubsection{Some useful properties of phase-type distributions and CTMCs}

\textbf{Closure.} The phase-type family of distributions is closed under various operations including addition (convolution), minimum, maximum, and finite mixtures \citep{Bladt2017,Neuts1981}.

\textbf{Minimum of Phase-Type Random Variables.} As mentioned above, the minimum of two independent phase-type random variables, parameterized by $\boldsymbol{\alpha_i}$, $\mathbf{A_i}$ ($i=1,2$), is also phase-type distributed. More specifically, its parameters are given by the following Kronecker product and sum \citep{Bladt2017}:  \begin{subequations}\label{eq:minPT}\begin{align} \boldsymbol{\alpha_\text{min}}=&\;\boldsymbol{\alpha}_1 \otimes \boldsymbol{\alpha}_2\\ \mathbf{A_\text{min}} =&\; \mathbf{A_1}\oplus \mathbf{A_2}.\end{align}\end{subequations} 

This statement extends to the minimum of more than two phase-type distributions by the properties of Kronecker products and sums.

\textbf{Expected Rewards.} The following is a property of the \textit{reward process} associated with a given CTMC. Specifically, suppose that $g_i$ is the \textit{reward rate}\footnote{This reward rate can be thought of as a mean rate of reward accrual in a more general renewal reward process context.} associated with the $i^\text{th}$ transient state in an absorbing CTMC, where the reward amount accrued while spending a duration of time $T_i$ in the $i^\text{th}$ state is $g_i\,T_i$. Let $W$ be the total reward accrued prior to reaching the absorbing state, and let $\mathbf{g}$ be the column vector of reward rates for the transient states. The expected reward accrued prior to reaching an absorbing state is (see Appendix \ref{A:reward} for a proof) \begin{equation} \label{eq:Ereward}	E(W)=\; \boldsymbol{\alpha}^\text{T}\,(-\mathbf{A}^{-1})\mathbf{g}.  \end{equation}

Recalling the equations above, let $\mathbf{U}=-\mathbf{A}^{-1}$ (this is called the \textit{Green} matrix, and is analogous to the \textit{fundamental matrix} in discrete time Markov chains). The entries $u_{ij}$ represent the expected time spent in the $j^\text{th}$ state prior to reaching the absorbing state, given that the initial state was the $i^\text{th}$ state \citep{Bladt2017}. 

Note that the expected time to absorption can be obtained from eq. \eqref{eq:Ereward} if the reward rates are all $g_i=1$ (cf. eq. \eqref{eq:pt}).

\textbf{CTMC Absorption Probabilities.} It is useful to know how individuals are distributed across multiple absorbing states upon leaving the transient states when an absorbing CTMC has more than one absorbing state. Obtaining the overall distribution across absorbing states can be achieved as follows, using standard Markov chain theory on the embedded jump process \citep{Resnick2002,Bailey1990} corresponding to the given phase-type distribution and additional assumptions related to the multiple absorbing states.  

Assume a state X is partitioned into $k$ sub-states X$_i$ and these constitute the transient states associated with a phase-type distribution with parameters $\boldsymbol{\alpha}\in\mathbb{R}^k$ and $\mathbf{A}\in\mathbb{R}^{k\times k}$. Assume the corresponding CTMC also has $m>1$ absorbing states. Suppose $\mathbf{C}$ is a $k\times m$ matrix where the row $i$ and column $j$ entry of $\mathbf{C}$ is the probability that an individual enters the $j^\text{th}$ absorbing state given that it left the $i^\text{th}$ transient state (the rows of $\mathbf{C}$ sum to 1). One can compute a $1 \times m$ vector of absorption probabilities $\boldsymbol{\pi}$, where $\pi_i$ is the expected fraction of individuals that ultimately enter the $i^\text{th}$ absorbing state, as follows.

To begin, $\boldsymbol{\pi}$ depends only on the transition probabilities, not the rates. Thus, we must first obtain the \textit{transition probability matrix} $\mathbf{P}$ for the embedded jump process implicit in a transition rate matrix like eq. \eqref{eq:ratematrix} for a phase-type distribution with parameters $\mathbf{A}$ and $\boldsymbol{\alpha}$. Since this matrix has the form \begin{equation}  \mathbf{P} =\; \begin{bmatrix} \mathbf{Q} & \mathbf{R} \\ \mathbf{0} & \mathbf{I} \end{bmatrix}  \end{equation}
it suffices to find $\mathbf{Q}$ and $\mathbf{R}$. 

The vector of exponential dwell-time rates for the transient states is $\mathbf{r}=-\text{diag}(\mathbf{A})$, where vector $\text{diag}(\mathbf{M})$ denotes the diagonal entries of matrix $\mathbf{M}$. 

Define matrix $\mathbf{Q}$ by taking matrix $\mathbf{A}+\mathbf{D_r}$ \textendash{} where $\mathbf{D_v}$ denotes a diagonal matrix with entries of vector $\mathbf{v}$ along its diagonal \textendash{} and dividing each ($i^\text{th}$) row by the corresponding rate $r_i$. This yields \begin{equation}\label{eq:AQ} \mathbf{Q} =\; \mathbf{D_r}^{-1} \mathbf{A} +  \mathbf{I}, \quad \text{thus} \quad -\mathbf{A}=\;\mathbf{D_r}\big(\mathbf{I} - \mathbf{Q}\big) \qquad \text{and} \qquad \mathbf{D_r}^{-1}\mathbf{A} =\;  \mathbf{Q}  - \mathbf{I}.\end{equation}

$\mathbf{R}$ has the same dimensions as $\mathbf{C}$, and can be determined from $\mathbf{C}$ as follows. The loss rate vector $\mathbf{a}=-\mathbf{A1}$ in eqs. \eqref{eq:ratematrix} and \eqref{eq:a} can be divided (element-wise) by the rate vector $\mathbf{r}$ which yields the probability vector \begin{equation}
\label{eq:q} \mathbf{q}=\mathbf{D_r}^{-1}(-\mathbf{A1}).
\end{equation} The entries $q_i=a_i/r_i$ are the probabilities of entering an absorbing state upon leaving the $i^\text{th}$ transient state. Since the $i^\text{th}$ row of $\mathbf{C}$ is the conditional distribution of individuals across the absorbing states coming from the $i^\text{th}$ transient state, the $i^\text{th}$ row of $\mathbf{R}$ is $q_i$ times the $i^\text{th}$ row of $\mathbf{C}$. \begin{equation}
\mathbf{R} =\; \mathbf{D_q}\,\mathbf{C} =\; \mathbf{D_r}^{-1}\mathbf{D_a}\,\mathbf{C}.
\end{equation}

The desired absorption probabilities can be calculated from $\mathbf{Q}$ and $\mathbf{R}$ as one would for a discrete time Markov chain \citep{Resnick2002,Bladt2017}. The probabilities of hitting the $j^\text{th}$ absorbing state, given the initial state was the $i^\text{th}$ transient state, are the $ij$ entries in the matrix \begin{equation} \label{eq:absprob} \big(\mathbf{I} - \mathbf{Q}\big)^{-1} \,\mathbf{R} = \big(-\mathbf{D_r}^{-1}\mathbf{A} \big)^{-1} \,\mathbf{R} =\; (-\mathbf{A}^{-1} )\mathbf{D_r}\,\mathbf{R} =\; (-\mathbf{A}^{-1} ) \mathbf{D_a}\,\mathbf{C}.\end{equation}

Finally, averaging the columns in the above matrix across the initial distribution vector $\boldsymbol\alpha$ (assuming $\sum_{i=1}^k\alpha_i=1$) yields the absorption probabilities \begin{equation}
\label{eq:absprobavg} \boldsymbol{\pi} =\; \boldsymbol{\alpha}^\text{T} \; \big(\mathbf{I} - \mathbf{Q}\big)^{-1} \,\mathbf{R}.\end{equation}

For additional properties of phase-type distributions, see \citet{Bladt2017,Reinecke2012a, Reinecke2012b, Horvath2012, Horvath2016, Altiok1985} and  references cited in \citet{Hurtado2019}. For additional properties of CTMCs and associated processes, consult a standard text on stochastic processes, e.g., \citet{Resnick2002,Bailey1990,Karlin1975,Taylor1998}.

\subsection{Generalized linear chain trick (GLCT)}

The following overview of the Generalized Linear Chain Trick (GLCT) is adapted from the more detailed presentation of the GLCT found in \citet{Hurtado2019} and from \citet{HurtadoRichards2021, HurtadoRichards2020b}.  As we show below, ODE models derived using the GLCT can be used to prove general results for models that could otherwise be obtained using the standard Linear Chain Trick (LCT). This is partly the result of the equations being in a specific matrix-vector form \textendash{} which can be analyzed without the constraint of a fixed model dimension \textendash{} but also because the analysis of such models can give rise to quantities associated with other related phase-type distributions and related quantities, and it can be useful to recognize them as such.

The GLCT enables modelers to interpret many existing ODE models through the lens of Markov chain theory, and to derive new mean field ODE models relatively quickly by bypassing the need to explicitly derive those ODEs using mean field integral equations and their derivatives \citep{Hurtado2019, HurtadoRichards2020b}. Such derivations can be obtained using the GLCT to generalize an existing model, or to derive a new ODE model from first principles as discussed in \citet{Hurtado2019}. Here we take the first approach, as detailed in \citet{HurtadoRichards2020b}, which can be summarized as follows.

Just as the LCT is used to introduce Erlang distributed dwell times into an existing ODE model, the GLCT can be used to take an existing (e.g., ODE or DDE) model and modify its assumptions to introduce phase-type distributed dwell times resulting in a new ODE model. This can most easily be done by first applying the standard LCT, then writing that new set of ODEs in matrix-vector form \textit{a la} Theorem \ref{th:glct} below to apply the GLCT for phase-type distributions (see \citet{HurtadoRichards2020b} for examples).  

Theorem \ref{th:glct} below is a re-statement of the GLCT for phase-type distributions (Corollary 2) in \citet{Hurtado2019}, as it is stated in \citet{HurtadoRichards2020b} using the above notation.  

\begin{theorem}[\textbf{GLCT for Phase-Type Distributions}] \label{th:glct}
	Assume individuals enter a state (call it state X) at rate $\mathcal{I}(t)\in\mathbb{R}$ and that the distribution of time spent in state X follows a continuous phase-type distribution given by the length $k$ initial probability vector $\boldsymbol\alpha$ and the $k\times k$ matrix $\mathbf{A}$. Then partitioning X into $k$ sub-states X$_i$, and denoting the corresponding amount of individuals in state X$_i$ at time $t$ by $x_i(t)$, then the mean field equations for these sub-states $x_i$ are given by \begin{equation} \frac{d}{dt}\mathbf{x}(t)=\boldsymbol\alpha\,\mathcal{I}(t) + \mathbf{A}^\text{T}\,\mathbf{x}(t) \label{eq:GLCT}\end{equation} where the rate of individuals leaving each of these sub-states of X is given by the vector $(-\mathbf{A\,1})\circ\mathbf{x}$, where $\circ$ is the Hadamard (element-wise) product of the two vectors, and thus the total rate of individuals leaving state X is given by the sum of those terms, i.e., $(-\mathbf{A\,1})^\text{T}\mathbf{x}=-\mathbf{1}^\text{T}\mathbf{A}^\text{T}\mathbf{x}$.
\end{theorem}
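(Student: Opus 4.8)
The plan is to treat $\mathbf{A}$ as the transient-state block of the CTMC generator in eq.~\eqref{eq:ratematrix} and to build $\mathbf{x}(t)$ by superposing independent phase-type ``clocks,'' one per individual entering state X. The central object is the matrix exponential $e^{\mathbf{A}t}$: by the Kolmogorov forward equations for the absorbing CTMC, the $(i,j)$ entry of $e^{\mathbf{A}t}$ is the probability that an individual which started in transient sub-state X$_i$ occupies sub-state X$_j$ at time $t$ without yet having been absorbed (i.e.\ without having left X). Consequently, a single individual entering X at time $\tau$ with initial sub-state distribution $\boldsymbol\alpha$ occupies the sub-states at time $t\ge\tau$ with expected distribution $e^{\mathbf{A}^\text{T}(t-\tau)}\boldsymbol\alpha$, which is the column-vector form of the row vector $\boldsymbol\alpha^\text{T}e^{\mathbf{A}(t-\tau)}$ already appearing in the density $f$ of eq.~\eqref{eq:pt}.

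First I would justify the mean-field reduction. Because each entrant runs an independent copy of this CTMC, linearity of expectation lets me sum these contributions: writing $\mathbf{x}(t)$ for the vector of expected sub-state counts, superposition over all past entry times (weighted by the influx rate $\mathcal{I}$), together with the decay of any individuals present at an initial time $t_0$, gives the Duhamel / variation-of-parameters representation
\begin{equation}
\mathbf{x}(t)=e^{\mathbf{A}^\text{T}(t-t_0)}\mathbf{x}(t_0)+\int_{t_0}^{t}e^{\mathbf{A}^\text{T}(t-\tau)}\,\boldsymbol\alpha\,\mathcal{I}(\tau)\,d\tau.
\label{eq:duhamel}
\end{equation}

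Next I would differentiate eq.~\eqref{eq:duhamel} in $t$. Applying the Leibniz integral rule, the boundary contribution from the upper limit produces the instantaneous influx term $e^{\mathbf{A}^\text{T}\cdot 0}\boldsymbol\alpha\,\mathcal{I}(t)=\boldsymbol\alpha\,\mathcal{I}(t)$, while differentiating the exponential kernels inside the two remaining terms pulls out a common factor $\mathbf{A}^\text{T}$ acting on exactly the right-hand side of eq.~\eqref{eq:duhamel}, namely $\mathbf{A}^\text{T}\mathbf{x}(t)$; summing yields eq.~\eqref{eq:GLCT}. For the efflux claim, the per-capita rate at which an individual in X$_i$ leaves X (is absorbed) is the $i$-th entry of $\mathbf{a}=-\mathbf{A}\mathbf{1}$ from eq.~\eqref{eq:a}, so the vector of rates leaving each sub-state is $(-\mathbf{A}\mathbf{1})\circ\mathbf{x}$, and summing gives the total efflux $(-\mathbf{A}\mathbf{1})^\text{T}\mathbf{x}=-\mathbf{1}^\text{T}\mathbf{A}^\text{T}\mathbf{x}$. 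A consistency check is that convolving $\mathcal{I}$ against the phase-type density $f(t)=\boldsymbol\alpha^\text{T}e^{\mathbf{A}t}(-\mathbf{A}\mathbf{1})$ of eq.~\eqref{eq:pt} reproduces this same total.

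As an independent cross-check I would also record the elementary flux-balance derivation: for each sub-state, $\dot x_i$ equals the external influx $\alpha_i\mathcal{I}$ plus the internal inflow $\sum_{j\ne i}(\mathbf{A})_{ji}x_j$ minus the total outflow $(-(\mathbf{A})_{ii})x_i$, which collapses to $\alpha_i\mathcal{I}+(\mathbf{A}^\text{T}\mathbf{x})_i$ once the diagonal entry is recognized as minus the sum of all outgoing rates. The main obstacle is not the calculus but the modeling step: making the mean-field passage rigorous requires the independence of the individual phase-type clocks together with correct bookkeeping of the generator's transpose --- since $e^{\mathbf{A}t}$ propagates probabilities as a right action ($\mathbf{p}^\text{T}\mapsto\mathbf{p}^\text{T}e^{\mathbf{A}t}$), whereas the state-count vector $\mathbf{x}$ evolves under the left action of $\mathbf{A}^\text{T}$. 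Keeping that transpose straight, and confirming that the $\boldsymbol\alpha$-weighting of the influx is consistent with the initial-distribution interpretation in eqs.~\eqref{eq:alpha}--\eqref{eq:ratematrix}, is where the care is needed.
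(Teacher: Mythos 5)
Your argument is correct, and it is essentially the same route as the paper's source: Theorem \ref{th:glct} is stated here without proof as a restatement of Corollary 2 in \citet{Hurtado2019}, whose derivation likewise passes through the mean-field integral representation of $\mathbf{x}(t)$ \textendash{} the convolution of the influx $\mathcal{I}$ against the sub-stochastic propagator $e^{\mathbf{A}t}$, with $(e^{\mathbf{A}t})_{ij}$ interpreted as the probability of occupying transient sub-state $j$ having started in $i$ \textendash{} followed by differentiation, which is precisely your variation-of-parameters step. Your efflux identification via $\mathbf{a}=-\mathbf{A\,1}$, the flux-balance cross-check, and the transpose bookkeeping (probabilities propagate as a right action of $e^{\mathbf{A}t}$, counts evolve under the left action of $\mathbf{A}^\text{T}$) are all sound, so there is no gap.
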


The standard linear chain trick (LCT) is a special case of Theorem \ref{th:glct}. For completeness, it is provided below as stated in \citet{HurtadoRichards2020b}. The LCT is stated here for generalized Erlang (hypoexponential) distributions (i.e., the distribution arising from the sum of $k$ independent exponentially distributed random variables, with rates $r_i$, $i=1,\ldots,k$). 

\begin{corollary}[\textbf{Linear Chain Trick for Erlang and Hypoexponential Distributions}] \label{th:lct}
	~\\
	Consider the GLCT above (Theorem \ref{th:glct}). Assume that the dwell-time distribution is a generalized Erlang (hypoexponential) distribution with rates $\mathbf{r}=[r_1,\;r_2,\;\ldots\;,r_k]^\text{T}$, where $r_i>0$, or an Erlang distribution with rate $r$ (all rates $r_i=r$) and shape $k$ (or if written in terms of shape $k$ and mean $\tau=k/r$, use $r=k/\tau$). Then the corresponding mean field equations are 
	\begin{equation} \label{eq:LCT}\begin{split} 
	\frac{dx_1}{dt} =&\; \mathcal{I}(t) - r_1\,x_1 \\
	\frac{dx_2}{dt} =&\;  r_1\,x_1 - r_2\,x_2 \\
	&\vdots \\
	\frac{dx_k}{dt} =&\; r_{k-1}\,x_{k-1} - r_{k}\,x_{k}.
	\end{split}\end{equation}	
\end{corollary}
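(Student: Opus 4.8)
The plan is to prove the corollary by \emph{specialization}: I would exhibit the particular phase-type parameters $(\boldsymbol\alpha, \mathbf A)$ that encode a generalized Erlang (hypoexponential) dwell time, substitute them into the GLCT equation \eqref{eq:GLCT} of Theorem \ref{th:glct}, and then read off the scalar system \eqref{eq:LCT} componentwise. Because Theorem \ref{th:glct} is already available, no new dynamical argument is required; the entire content lies in correctly identifying the matrix and initial vector and then carrying out a single matrix-vector multiplication.

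First I would construct the absorbing CTMC associated with the hypoexponential distribution. Its defining property is that a hypoexponential random variable is the sum of $k$ independent exponential holding times with rates $r_1,\ldots,r_k$, so the corresponding CTMC starts deterministically in sub-state X$_1$ and visits the transient states $1\to 2\to\cdots\to k$ in order, leaving state $i$ for state $i+1$ at rate $r_i$ (and state $k$ for the absorbing state at rate $r_k$). Following the rate-matrix convention of footnote~\ref{footnote:ratematrix}, this fixes the initial distribution $\boldsymbol\alpha=[1,0,\ldots,0]^\text{T}$ and the upper-bidiagonal transient block
\begin{equation} \label{eq:Ahypo}
\mathbf A=\begin{bmatrix} -r_1 & r_1 & & \\ & -r_2 & \ddots & \\ & & \ddots & r_{k-1} \\ & & & -r_k \end{bmatrix},
\end{equation}
with $-r_i$ on the diagonal (the total loss rate from state $i$) and $r_i$ on the superdiagonal (the rate of the single transition $i\to i+1$).

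Next I would transpose: $\mathbf A^\text{T}$ is lower-bidiagonal, carrying $-r_i$ on the diagonal and $r_i$ in the $(i+1,i)$ entry. Substituting $\boldsymbol\alpha$ and $\mathbf A^\text{T}$ into \eqref{eq:GLCT} gives $\tfrac{d}{dt}\mathbf x = [\mathcal I(t),0,\ldots,0]^\text{T} + \mathbf A^\text{T}\mathbf x$. Reading the first row yields $\tfrac{dx_1}{dt}=\mathcal I(t)-r_1 x_1$, while the $i^\text{th}$ row for $2\le i\le k$ yields $\tfrac{dx_i}{dt}=r_{i-1}x_{i-1}-r_i x_i$; this is precisely \eqref{eq:LCT}. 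The pure Erlang case follows at once by setting every $r_i=r$.

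The main obstacle — modest, but the one genuine point to establish — is justifying that the pair $(\boldsymbol\alpha,\mathbf A)$ in \eqref{eq:Ahypo} truly parameterizes the hypoexponential distribution, rather than merely asserting its bidiagonal shape. I would verify this using eqs.~\eqref{eq:pt}: with $\boldsymbol\alpha=[1,0,\ldots,0]^\text{T}$ and $\mathbf A$ upper-bidiagonal, the Laplace-Stieltjes transform $\mathcal L(s)=\boldsymbol\alpha^\text{T}(s\mathbf I-\mathbf A)^{-1}(-\mathbf A\mathbf 1)$ factors as $\prod_{i=1}^{k} r_i/(s+r_i)$, which is exactly the transform of a convolution of $k$ exponentials with rates $r_i$ — the defining property of the hypoexponential distribution. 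Everything else is routine matrix bookkeeping and careful tracking of the transpose.
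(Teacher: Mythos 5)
Your proposal is correct and matches the paper's own proof, which likewise substitutes the phase-type representation $\boldsymbol\alpha=[1,0,\ldots,0]^\text{T}$ and the upper-bidiagonal $\mathbf{A}$ (the paper's eqs.~\eqref{eq:alphaA}) into eq.~\eqref{eq:GLCT} and reads off eqs.~\eqref{eq:LCT}. Your extra step verifying via the Laplace-Stieltjes transform $\mathcal{L}(s)=\prod_{i=1}^{k} r_i/(s+r_i)$ that this pair genuinely parameterizes the hypoexponential distribution is a sound addition that the paper simply asserts, but it does not change the route of the argument.
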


\begin{proof} The phase-type distribution formulation of a generalized Erlang distribution with rates $r_i>0$, $i=1, \ldots, k$ is given by eqs. \eqref{eq:alphaA}. Substituting these into eq. \eqref{eq:GLCT} yields equations \eqref{eq:LCT}. \begin{equation} \label{eq:alphaA} \boldsymbol\alpha = \begin{bmatrix} 1 \\ 0 \\ \vdots \\ 0  \end{bmatrix}  \qquad \text{and} \qquad \mathbf{A} = \begin{bmatrix} -r_1 & r_1 & 0 & \cdots & 0 \\
0 & -r_2 & r_2  & \ddots & 0 \\
\vdots & \ddots & \ddots  & \ddots & \ddots \\
0 & 0 & \ddots &  -r_{k-1} & r_{k-1} \\
0 & 0 & \cdots &  0 & -r_k
\end{bmatrix}. \end{equation}  \end{proof}

See \citet{Hurtado2019} for additional results that further clarify the link between stochastic model assumptions and mean field ODE model equations, including a more general statement of the GLCT.

In addition to the statements above, the following intuition \textendash{} regarding individuals transitioning from one state into one of multiple other states \textendash{} is useful for understanding the links between underlying stochastic model assumptions and the corresponding mean field ODE model structure. 

Consider the following pair of simple scenarios, which yield the same mean field equations according to Theorem \ref{th:tomulti} below. 

Suppose the time an individual spends in a given state is assumed to be the minimum of $k$ exponentially distributed random variables (event times) with respective rates $r_i>0$, $i=1,\ldots,k$. The individual leaves their current state once the first of those events occurs. Further assume that if it is the $i^\text{th}$ of these events that occurs first, then the individual transitions to the $i^\text{th}$ recipient state with probability 1. Recall that the minimum of multiple independent exponentially\footnote{More generally, in the context of nonhomogeneous Poisson processes, the minimum of two $1^\text{st}$ event times corresponding to two independent Poisson processes, with rates $r_1(t)$ and $r_2(t)$, can be thought of as the $1^\text{st}$ event time under a Poisson process with rate $r(t)=r_1(t) + r_2(t)$. See \citet{Hurtado2019} for details.} distributed random variables (each with rate $r_i>0$, $i=1,\ldots,n$) is itself exponentially distributed with rate $r=r_1+\cdots+r_n$.

It follows that the mean field ODE terms that correspond to the scenario just described are equivalent to the mean field ODE terms obtained by assuming that the dwell time in the focal state is exponentially distributed with rate $r=r_1 + \cdots + r_n$, and that upon leaving that state individuals are distributed across the $n$ recipient states with probabilities $r_i/r$. 

The following theorem (a special case of Theorem 7 in \citet{Hurtado2019}) gives a more formal, and more general, statement of the mean field equivalence of the two scenarios above.

\begin{theorem}[\textbf{Mean field equivalence of proportional outputs \& competing exponential event times}]\label{th:tomulti} Suppose state X has a dwell time given by random variable $T=\min_i T_i$, where each $T_i$ is exponentially distributed with rate $r_i$, $i=1,\ldots,n$ and individuals transition to one of $m$ states Y$_\ell$, $\ell=1,\ldots,m$, with probability $p_{i\ell}(T)$ when $T=T_i$. The corresponding mean field model is equivalent to having instead assumed that the X dwell time is exponentially distributed with rate $r=\sum_{i=1}^n r_i$, and the transition probability vector for leaving X and entering one of the states Y$_\ell$ is given by $p_\ell=\sum_{i=1}^n p_{i\ell}\,r_i/r$.
\end{theorem}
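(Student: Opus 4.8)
The plan is to write down the mean field inflow and outflow terms for both scenarios and show they coincide term by term, so that the resulting ODE systems are identical. Let $x(t)$ denote the mean field amount in state X and $y_\ell(t)$ the amount in state Y$_\ell$.

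First I would handle the outflow from X. In the first scenario the dwell time is $T=\min_i T_i$, and by the fact recalled just above the theorem (that the minimum of independent exponentials with rates $r_i$ is itself exponential with rate $r=\sum_{i=1}^n r_i$), the total rate at which individuals leave X is $r\,x(t)$. This is exactly the outflow rate in the second scenario, where X is assumed exponential with rate $r$ from the start. Likewise the inflow into X is unchanged between the two scenarios, so the equation for $x(t)$ is identical in both.

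Next I would compute, in the first scenario, the rate of inflow into each recipient state Y$_\ell$. The key probabilistic fact is that for competing independent exponential clocks, the probability that clock $i$ fires first equals $r_i/r$, and, crucially, the identity of the first clock is independent of the value of the minimum $T$. This independence is what lets the (possibly $T$-dependent) transition probabilities $p_{i\ell}(T)$ be combined cleanly: conditioning on $T=T_i$ does not change the law of $T$, so the expected splitting of the outflow $r\,x(t)$ among the Y$_\ell$ is governed by $\sum_{i=1}^n (r_i/r)\,p_{i\ell}=p_\ell$. Hence the inflow rate into Y$_\ell$ from X is $r\,x(t)\,p_\ell=\big(\sum_{i=1}^n r_i\,p_{i\ell}\big)x(t)$. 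Comparing with the second scenario, where by construction the outflow $r\,x(t)$ is split among the Y$_\ell$ with probabilities $p_\ell=\sum_{i=1}^n p_{i\ell}\,r_i/r$, gives the identical inflow rate $r\,p_\ell\,x(t)$ into each Y$_\ell$. Since both the X equation and every Y$_\ell$ equation agree term by term, the two mean field ODE systems coincide, which is the asserted equivalence.

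I expect the main obstacle to be the careful treatment of the $T$-dependence in $p_{i\ell}(T)$: one must justify that the independence of the minimizing index from the minimum value allows the weights $r_i/r$ to be applied to the transition probabilities without a correction term, so that the effective instantaneous transition probability is genuinely $\sum_{i=1}^n p_{i\ell}\,r_i/r$. Everything else reduces to the min-of-exponentials identity and a term-by-term comparison of the two systems.
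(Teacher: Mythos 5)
Your proof is correct and follows essentially the same route as the paper, which states this result without a separate formal proof (deferring to Theorem 7 of \citet{Hurtado2019}) but justifies it in the preceding paragraphs by exactly your two ingredients: the minimum of independent exponentials with rates $r_i$ is exponential with rate $r=\sum_i r_i$, and the winning clock is index $i$ with probability $r_i/r$, so the outflow $r\,x(t)$ splits among the Y$_\ell$ with weights $p_\ell=\sum_i p_{i\ell}\,r_i/r$. Your observation that the minimizing index is independent of the value of the minimum is precisely the right way to handle the $T$-dependence of $p_{i\ell}(T)$, since it makes the weights $r_i/r$ valid pointwise in $T$ with no correction term, and your term-by-term flux comparison ($\sum_i r_i\,p_{i\ell}\,x(t)=r\,p_\ell\,x(t)$) is the natural formalization of the paper's argument.
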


\textbf{Example:} Consider the well known SIR model. The net loss rate from the infected state is often written $-(\gamma + \mu + \nu)I$ where the recovery rate is $\gamma$, the baseline mortality rate is $\mu$, and the disease-induced mortality rate is $\nu$. Accordingly, the fraction of those individuals leaving the infected state and entering the recovered state is the product of that total loss rate and recovery fraction: $\frac{\gamma}{\gamma + \mu + \nu}$ and thus the rate of individuals entering the recovered state is $$(\gamma + \mu + \nu)I\;\frac{\gamma}{\gamma + \mu + \nu}= \gamma\,I.$$

\subsection{SEIR models} \label{sec:SEIRmodels}

The well known SEIR model with mass action transmission, a fixed population size, and no births or deaths is given by eqs. \eqref{eq:SEIR} below \citep{Kermack1927,Kermack1991a,AndersonMay1992,BrauerCCC2011}. A more general SEIR model was introduced in \citet{HurtadoRichards2021}, which has phase-type distributed latent and infectious periods. Below, we further extend this model to include births and deaths, as well as waning immunity, where that duration of immunity is also phase-type distributed. Our goal is to derive an $\mathcal{R}_0$ expression for this general model, but first, we review these simpler models and their basic reproduction numbers.

Consider a population of size $N=S+E+I+R$, where $S$ is the number of susceptible individuals in the population, $E$ the number of individuals with latent infections (i.e., not yet symptomatic or contagious), $I$ the number of contagious infected individuals, and $R$ is the number of recovered individuals (for convenience, we will refer to these states with non-italic letters S, E, I, and R). 
\begin{subequations}  \label{eq:SEIR}  \begin{align}
	\frac{dS}{dt} =& \; -\beta\,S\,I \label{eq:SEIRa} \\
	\frac{dE}{dt} =& \; \beta\,S\,I - r_E\,E \label{eq:SEIRb}\\
	\frac{dI}{dt} =& \; r_E\,E - r_I\,I \label{eq:SEIRc}\\
	\frac{dR}{dt} =& \; r_I\,I \label{eq:SEIRd} 
	\end{align}
\end{subequations}

The model eqs. \eqref{eq:SEIR} can be interpreted as the mean field model for a continuous time stochastic SEIR model in which the time individuals spend in states E and I follow exponential distributions with respective rates $r_E$ and $r_I$, or equivalently, with respective means $\tau_E=1/r_E$ and $\tau_I=1/r_I$. 

The SEIRS model given by eqs. \eqref{eq:SEIRS} extends the above model to include births, deaths, and waning immunity. This model has a single locally asymptotically stable disease free equilibrium. This model assumes a constant birth rate $\Lambda$ and per-capita mortality rates $\mu$ which do not differ among the different disease states, thus $dN/dt=\Lambda - \mu\,N$. Upon recovery, immunity lasts for a period of time before individuals return to the susceptible class at rate $\epsilon$. Again, we can interpret these parameters as the mortality rate $\mu$ or mean lifetime $\tau_N=1/\mu$, and waning immunity loss rate $\epsilon$ or the mean duration of immunity $\tau_R=1/\epsilon$.

\begin{subequations}  \label{eq:SEIRS} \begin{align}
	\frac{dS}{dt} =& \; {\Lambda - \mu\,S} - \beta\,S\,I {+ \epsilon\,R} \\
	\frac{dE}{dt} =& \; \beta\,S\,I - r_E\,E {- \mu\,E} \\
	\frac{dI}{dt} =& \; r_E\,E - r_I\,I {- \mu\,I}\\
	\frac{dR}{dt} =& \; r_I\,I {- \mu\,R} {- \epsilon\,R}
	\end{align}
\end{subequations}

This SEIRS model can be further extended using the LCT to instead assume Erlang distributed latent and infectious periods\footnote{A simple generalization of this model would be to allow different rates for each sub-state of $E$ and $I$. Then the dwell times would follow the sum of independent but non-identically distributed exponentials each with their own rates $r_i$. The distribution of such sums are known as the \textit{hypoexponential} or \textit{generalized Erlang} distributions.}, with the same means $\tau_E$ and $\tau_I$ as the models above, but with coefficients of variation $c_{vE}$ and $c_{vI}$ chosen so that $k_E=1/\sqrt{c_{vE}}$, $k_I=1/\sqrt{c_{vI}}$ are integers\footnote{Alternatively, one could smoothly specify the $c_v$ using the mixture distribution approximation in the appendix of \citet{Hurtado2019}}. 

Equations for the resulting SIERS model with Erlang distributed latent and infectious periods are
\begin{subequations} \label{eq:SEIRSlct} \begin{align}
	\frac{dS}{dt} =& \; \Lambda -\beta\,I\,S - \mu\,S + \epsilon\,R \\ 
	\frac{dE_1}{dt} =& \; \beta\,S\,I - \frac{k_E}{\tau_E}E_1  - \mu \,E_1 \\
	\frac{dE_j}{dt} =& \; \frac{k_E}{\tau_E}E_{j-1} - \frac{k_E}{\tau_E}E_j  - \mu\, E_j, \qquad j=2,\ldots,k_E-1 \\
	\frac{dE_{k_E}}{dt} =& \; \frac{k_E}{\tau_E}E_{k_E-1} - \frac{k_E}{\tau_E}E_{k_E}  - \mu\, E_{k_E} \\ 
	\frac{dI_1}{dt} =& \; \frac{k_E}{\tau_E}E_{k_E} - \frac{k_I}{\tau_I}I_1  - \mu\, I_1\\
	\frac{dI_i}{dt} =& \; \frac{k_I}{\tau_I}I_{i-1} - \frac{k_I}{\tau_I}I_i   - \mu\, I_i, \qquad i=2,\ldots,k_I-1 \\
	\frac{dI_{k_I}}{dt} =& \; \frac{k_I}{\tau_I}I_{k_I-1} - \frac{k_I}{\tau_I}I_{k_I}   - \mu \,I_{k_I} \\ 
	\frac{dR}{dt} =& \; \frac{k_I}{\tau_I}I_{k_I}  - \mu\, R - \epsilon\,R.
	\end{align} \end{subequations}

Let us next review the derivation and interpretation of $\mathcal{R}_0$ for the three models above.

\subsubsection{Basic reproduction numbers ($\mathcal{R}_0$) for SEIR and SEIRS models} \label{sec:R0}

Using the method in \citet{vdDW2002}, the $\mathcal{R}_0$ expression for eqs. \eqref{eq:SEIR} is 

\begin{equation}\label{eq:R0.SEIR}
\mathcal{R}_0 = \frac{\beta\,S_0}{r_I} = \!\!\!\!\underbrace{\beta\,S_0}_{\substack{\text{rate of new}\\\substack{\text{infections}}}}\!\!\!\!\!\!\!\!\!\!\!\!\!\!\overbrace{\frac{1}{r_I}.}^{\substack{\text{mean infectious}\\\substack{\text{period}}}}
\end{equation}

Observe that we can factor $\mathcal{R}_0$ into the product of a (per infected individual, per unit time) rate of new infections times the mean duration of infectiousness. We therefore interpret $\mathcal{R}_0$ as the expected number of new infectious cases per infectious individual added to a population at the disease free equilibrium (DFE). Similarly, $\mathcal{R}_0$ for the SEIRS model eqs. \eqref{eq:SEIRS}, with births, deaths, and waning immunity, is 

\begin{equation}\label{eq:R0.SEIRS}
\mathcal{R}_0 = \frac{\beta\,S_0\,r_E}{(r_E+\mu)(r_I+\mu)} = \!\!\!\!\underbrace{\beta\,S_0}_{\substack{\text{rate of new}\\\substack{\text{infections}}}}\!\!\!\!\!\!\!\!\!\!\!\!\overbrace{ \frac{r_E}{r_E+\mu} }^{\substack{\text{latent period}\\\substack{\text{survival probability}}}}\!\!\!\!\!\!\!\!\!\!\!\!\!\!\underbrace{\frac{1}{r_I+\mu},}_{\substack{\text{mean infectious}\\\substack{\text{period}}}}
\end{equation}

which can be similarly factored and interpreted but includes an additional term for the expected fraction of individuals who survive the latent period to become infectious.

To find $\mathcal{R}_0$ for eqs. \eqref{eq:SEIRSlct}, the SEIRS model with Erlang distributed latent and infectious periods, we again use the next generation operator approach \citet{vdDW2002}, but we assume fixed values of $k_E$ and $k_I$. This yields $\mathcal{R}_0$ expressions for each particular case considered, which each have an \textit{ad hoc} interpretation of terms similar to eq. \eqref{eq:R0.SEIRS}. While it is possible to inspect a few specific cases and conjecture a general expression for $\mathcal{R}_0$ in such situations, it is often the case that in practice no general expression for $\mathcal{R}_0$ is formally obtained \citep[e.g.,][]{Johnson2016}. In section \ref{sec:results} below, we show how $\mathcal{R}_0$ can be formally obtained by writing the models in a form suggested by the GLCT (Theorem \ref{th:glct}).

Consider the $\mathcal{R}_0$ expression obtained from eqs. \eqref{eq:SEIRSlct} in the specific case where $k_E=3$, $k_I=4$, and rates $r_{E,k_E}=k_E/\tau_E$ and $r_{I,k_{I}}=k_I/\tau_I$. In this case,  
\begin{equation}
\mathcal{R}_0=\beta\,S_0\,\frac{(r_{E,3})^3}{(r_{E,3}+\mu)^3}\bigg(\frac{1}{r_{I,4}+\mu}+\frac{r_{I,4}}{(r_{I,4}+\mu)^2}+\frac{{r_{I,4}}^2}{(r_{I,4}+\mu)^3}+\frac{{r_{I,4}}^3}{(r_{I,4}+\mu)^4}\bigg).
\end{equation}
This might prompt one to conjecture that, for arbitrary positive integers $k_E$ and $k_I$, 
\begin{equation} \label{eq:SEIRSlctR0}
\mathcal{R}_0=\beta\,S_0\,\frac{(r_{E,k_E})^{k_E}}{(r_{E,k_E}+\mu)^{k_E}}\bigg(\frac{1}{r_{I,k_I}+\mu}+\frac{r_{I,k_I}}{(r_{I,k_I}+\mu)^2}+\dots+\frac{{r_{I,k_I}}^{k_I-1}}{(r_{I,k_I}+\mu)^{k_I}}\bigg).
\end{equation}
However this is only a conjecture without any additional supporting analyses. Such terms can also sometimes be challenging to interpret with confidence. As with the $\mathcal{R}_0$ expressions for the two simpler models above, we can interpret this expression \textit{ad hoc} as the per-infectious-individual rate of new infections, times the probability of surviving all sub-states of E, times the expected period of infectiousness. However, in practice, the proper interpretation may not be obvious.  

In section \ref{sec:SEIRSpt}, we introduce a novel, GLCT-based phase-type distributed SEIRS model (a further generalization of eqs. \eqref{eq:SEIR} - \eqref{eq:SEIRSlct}), and illustrate how the natural matrix-vector formulation of that model can be used to find and interpret a very general expression for $\mathcal{R}_0$. In doing so, as detailed in section \ref{sec:SEIRSconjproof}, we confirm that the above conjecture and $\mathcal{R}_0$ interpretation regarding eqs. \eqref{eq:SEIRSlctR0} holds true for arbitrary positive integers $k_E$ and $k_I$. But first, let us turn our attention to the derivation of a similarly general (and in some ways, simpler) expression for the predator reproduction number $\mathcal{R}_{pred}$ for our generalized predator-prey model.

\subsection{Predator-prey models}

In \citet{HurtadoRichards2021}, an extension of the Rosenzweig-MacArthur model was introduced which assumes an immature/mature stage structure in the predator population. Here we consider a slight extension of that model (the aforementioned model is the $\eta=0$ case of the equations below) given by
\begin{subequations} \label{eq:RMPT} \begin{align} 
	\frac{dN}{dt} =&\; r\,N\bigg(1-\frac{N}{K}\bigg)-\frac{a\,(P_m+\eta\,P_{imm})}{h+N}N\\
	\frac{d\mathbf{x}}{dt} =&\; \chi\frac{a\,N}{h+N}P_m\;\boldsymbol{\alpha}_\mathbf{x} + \mathbf{A_x}^\text{T} \mathbf{x}  \\
	\frac{d\mathbf{y}}{dt} =&\; \underbrace{-\mathbf{1}^\text{T}\mathbf{A_x}^\text{T}\mathbf{x}}_\text{scalar}\;\boldsymbol{\alpha}_\mathbf{y} + \mathbf{A_y}^\text{T}\mathbf{y}  .\label{eq:RMPTc}
	\end{align} \end{subequations} 

The prey population ($N$) follows a logistic growth model in the absence of immature predators ($P_{imm}$) and mature predators ($P_m$), and are removed by predators according to a Holling type-II functional response, with maximum predation rate $a$ and half saturation constant $h$. For simplicity, immature predators have no mortality, are tracked by the $k$ sub-state variables in vector $\mathbf{x}=[x_1,\ldots,x_k]^\text{T}$ ($P_{imm}=\sum x_i$), and have a maturation time that is phase-type distributed with parameter vector $\boldsymbol{\alpha_\mathbf{x}}$ and matrix $\mathbf{A_x}$. Recall that, if $\alpha_{\mathbf{x}*}$ is the sum of the elements of $\boldsymbol{\alpha}_\mathbf{x}$, then $\alpha_{\mathbf{x}*}>0$ implies only a fraction $1-\alpha_{\mathbf{x}*}$ of the predators born enter the immature predator sub-states. One could assume the remainder die before reaching this stage, or one could add an additional term to eq. \eqref{eq:RMPTc} if that fraction is able to skip the immature stages and directly enter the mature predator sub-states. For simplicity, below we will assume $\alpha_{\mathbf{x}*}=0$. Once mature, predators then survive as adults for a period of time that is also phase-type distributed (tracked by $m$ state variables in the vector $\mathbf{y}=[y_1,\ldots,y_m]^\text{T}$, where $P_{m}=\sum y_i$) defined by parameter vector $\boldsymbol{\alpha_\mathbf{y}}$ and matrix $\mathbf{A_y}$ \citep{HurtadoRichards2021}. 

In the next section, we show how reproduction numbers for this generalized predator-prey model can be obtained for arbitrary phase-type distributions, and then we will revisit the similarly generalized SEIRS model discussed above.

\section{Results}\label{sec:results}

\subsection{Predator-prey model with phase-type predator maturation time and lifetime}

Mathematically, the next generation operator approach for finding basic reproduction numbers in \citet{vdDW2002} can be used to find the population reproduction numbers in a model like eqs. \eqref{eq:RMPT}. The results of applying this technique to find the predator reproduction number $\mathcal{R}_\text{pred}$ for this model are summarized in the following theorem.

\begin{theorem} \label{Th:RMR} The predator-free (prey-only) equilibrium state for eqs. \eqref{eq:RMPT} is locally asymptotically stable if $\mathcal{R}_\text{pred} < 1$, but unstable if $\mathcal{R}_\text{pred}>1$, where $\mathcal{R}_\text{pred}$ is given by
	
	\begin{equation}\label{eq:Rpred} \mathcal{R}_\text{pred}=\underbrace{\overbrace{\chi\left(1-\alpha_{\mathbf{x}*}\right)\,\frac{aN}{h+N}}^\text{birth rate} \; \overbrace{\left({\boldsymbol{\alpha_\mathbf{y}}}^\text{T}\, (-\mathbf{A_\mathbf{y}})^{-1}\, \boldsymbol{1}\right)}^{\substack{\text{mean time predators}\\\substack{\text{spend as adults}}}}}_\text{mean number of offspring per predator}. \end{equation}
	
	The product of the first two terms in eq. \eqref{eq:Rpred} is the expected number of new immature predators created by a single mature predator over the average predator's reproductive lifetime, when introduced to a prey population at the predator-free (prey-only) equilibrium. The third term is the fraction of predators that survive to reach maturity. Here $1-\alpha_{\mathbf{x}*}$ is the sum of entries in $\boldsymbol{\alpha_\mathbf{x}}$ (see eqs. \eqref{eq:pt}).
	
\end{theorem}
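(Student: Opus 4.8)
The plan is to follow the next generation operator construction of \citet{vdDW2002}, treating the two predator populations (immature $\mathbf{x}$ and mature $\mathbf{y}$) as the ``infected'' compartments invading a prey population held at its predator-free state. First I would locate that equilibrium: setting $\mathbf{x}=\mathbf{0}$ and $\mathbf{y}=\mathbf{0}$ removes the predation and birth terms, so the prey equation reduces to logistic growth with stable nontrivial rest point $N^*=K$, at which the scalar prey Jacobian is $r(1-2N/K)|_{N=K}=-r<0$. Linearizing the full system at this point gives a block upper-triangular Jacobian (the predator equations have no $N$-sensitivity at the equilibrium because $\mathbf{x}=\mathbf{y}=\mathbf{0}$ zeroes the relevant products), so the prey eigenvalue $-r$ decouples and local stability is entirely governed by the predator block. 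This is exactly the setting in which the vdDW next-generation theorem applies.

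Next I would split the predator-block Jacobian as $\mathbf{F}-\mathbf{V}$, with $\mathbf{F}$ collecting the production of new predators and $\mathbf{V}$ all other transfers. The only ``birth'' term is $\chi\,\frac{aN^*}{h+N^*}\,P_m\,\boldsymbol{\alpha_\mathbf{x}}$, which (using $P_m=\mathbf{1}^\text{T}\mathbf{y}$) linearizes to the single rank-one block $\mathbf{F}_{12}=\chi\frac{aN^*}{h+N^*}\boldsymbol{\alpha_\mathbf{x}}\mathbf{1}^\text{T}$ sending $\mathbf{y}$ into the $\mathbf{x}$-equations; everything else — the internal phase-type dynamics $\mathbf{A_x}^\text{T}$, $\mathbf{A_y}^\text{T}$ and the maturation coupling $(-\mathbf{1}^\text{T}\mathbf{A_x}^\text{T}\mathbf{x})\,\boldsymbol{\alpha_\mathbf{y}}$ — goes into $\mathbf{V}$. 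I would then check the vdDW sign hypotheses: $\mathbf{F}\geq 0$ is immediate, and $\mathbf{V}$ is block lower-triangular with diagonal blocks $-\mathbf{A_x}^\text{T}$, $-\mathbf{A_y}^\text{T}$ (nonsingular M-matrices, since $\mathbf{A_x}$, $\mathbf{A_y}$ are phase-type sub-generators) and off-diagonal block $\boldsymbol{\alpha_\mathbf{y}}\mathbf{1}^\text{T}\mathbf{A_x}^\text{T}\leq 0$ (because $-\mathbf{A_x}\mathbf{1}\geq 0$ and $\boldsymbol{\alpha_\mathbf{y}}\geq 0$), so $\mathbf{V}$ is a nonsingular M-matrix with $\mathbf{V}^{-1}\geq 0$.

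The computation of $\mathcal{R}_\text{pred}=\rho(\mathbf{F}\mathbf{V}^{-1})$ then exploits this structure. Inverting the block lower-triangular $\mathbf{V}$ and multiplying by $\mathbf{F}$ (whose only nonzero block is the top-right $\mathbf{F}_{12}$) yields a matrix whose bottom block-row vanishes, so $m$ eigenvalues are zero and $\rho(\mathbf{F}\mathbf{V}^{-1})$ equals the spectral radius of the $k\times k$ top-left block $\mathbf{F}_{12}\,(-\mathbf{V}_{22}^{-1}\mathbf{V}_{21}\mathbf{V}_{11}^{-1})$. In simplifying it the key cancellation is $\mathbf{A_x}^\text{T}(\mathbf{A_x}^{-1})^\text{T}=\mathbf{I}$, which removes all $\mathbf{A_x}$ dependence and leaves a rank-one matrix proportional to $\boldsymbol{\alpha_\mathbf{x}}\mathbf{1}^\text{T}$ with scalar prefactor $\chi\frac{aN^*}{h+N^*}\,\boldsymbol{\alpha_\mathbf{y}}^\text{T}(-\mathbf{A_y})^{-1}\mathbf{1}$. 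Since all quantities are nonnegative, the unique nonzero eigenvalue of $\boldsymbol{\alpha_\mathbf{x}}\mathbf{1}^\text{T}$ is $\mathbf{1}^\text{T}\boldsymbol{\alpha_\mathbf{x}}=1-\alpha_{\mathbf{x}*}$, and multiplying through reproduces eq. \eqref{eq:Rpred} with $N$ evaluated at $N^*=K$. Theorem 2 of \citet{vdDW2002} then delivers the stability dichotomy about $\mathcal{R}_\text{pred}=1$.

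I expect the main obstacle to be the $\mathbf{F}/\mathbf{V}$ bookkeeping rather than the algebra: one must decide that maturation ($\mathbf{x}\to\mathbf{y}$) is a transfer and not a ``birth,'' and then confirm that the unusual coupling block produced by the GLCT still leaves $\mathbf{V}$ a nonsingular M-matrix so that the vdDW hypotheses genuinely hold. Once the split and the sign conditions are secured, the block-triangular inversion, the $\mathbf{A_x}$ cancellation, and the rank-one spectral-radius evaluation are routine.
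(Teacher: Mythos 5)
Your proposal follows essentially the same route as the paper's proof: the identical next-generation split (predator births $\chi\frac{aN}{h+N}P_m\boldsymbol{\alpha_\mathbf{x}}$ in $\mathbf{F}$, maturation and phase-type dynamics in $\mathbf{V}$), the same block lower-triangular inversion of $\mathbf{V}$, the cancellation of $\mathbf{A_x}$, factoring out the scalar $\boldsymbol{\alpha_\mathbf{y}}^\text{T}(-\mathbf{A_y})^{-1}\mathbf{1}$, and evaluating $\rho(\boldsymbol{\alpha_\mathbf{x}}\mathbf{1}^\text{T})=1-\alpha_{\mathbf{x}*}$. The only differences are cosmetic: you verify the M-matrix/sign hypotheses of van den Driessche--Watmough explicitly and pin down $N^*=K$, whereas the paper simply asserts the hypotheses hold and leaves $N$ generic in the final expression.
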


\begin{proof} 
Using the next generation operator approach detailed in \citet{vdDW2002} to find $\mathcal{R}_\text{pred}$, we first rewrite the system as 
\begin{displaymath}
	\begin{pmatrix} \dot{\mathbf{x}} \\ \dot{\mathbf{y}} \\\dot{\mathbf{N}} \end{pmatrix}=\underbrace{\begin{pmatrix} \boldsymbol{\alpha}_\mathbf{x}\;\chi\frac{a\,N}{h+N}P_m \\ 0 \\ 0 \end{pmatrix}}_{\mathcal{F}} -\left[ 
	\underbrace{\begin{pmatrix} -\mathbf{D}_{\text{diag}(\mathbf{A_x})} \mathbf{x} \\ -\mathbf{D}_{\text{diag}(\mathbf{A_\mathbf{y}})} \mathbf{y} \\ \frac{a\,N}{h+N}(P_m+\eta\,P_{imm}) \end{pmatrix}}_{\mathcal{V}^{-}} - 
	\underbrace{\begin{pmatrix} {(\mathbf{A_x}-\mathbf{D}_{\text{diag}(\mathbf{A_x})})}^\text{T} \mathbf{x} \\ \boldsymbol{\alpha_\mathbf{y}} (-\mathbf{1}^\text{T}\mathbf{A_{x}}^\text{T} \mathbf{x}) + (\mathbf{A_\mathbf{y}}-\mathbf{D}_{\text{diag}(\mathbf{A_\mathbf{y}})})^\text{T} \mathbf{y} \\ r\,N\bigg(1-\frac{N}{K}\bigg) \end{pmatrix}}_{\mathcal{V}^{+}} \right]
\end{displaymath} where $\mathcal{F}$, $\mathcal{V}^{-}$, and $\mathcal{V}^{+}$ satisfy the requirements of Theorem 2 in \citet{vdDW2002}. 

The Jacobians of $\mathcal{F}$ and $\mathcal{V}=\mathcal{V}^{-}-\mathcal{V}^{+}$ evaluated at the prey-only equilibrium yield the matrices

\begin{equation}
	\mathbf{F}=\begin{bmatrix} \mathbf{0}_{|\mathbf{x}|\times|\mathbf{x}|} & \boldsymbol{\alpha_\mathbf{x}} {\boldsymbol{1}_{|\mathbf{y}|}}^\text{T} \chi \frac{aN}{h+N}  \\ \mathbf{0}_{|\mathbf{y}|\times|\mathbf{x}|} & \mathbf{0}_{|\mathbf{y}|\times|\mathbf{y}|}\end{bmatrix} \quad \text{ and } \quad
	\mathbf{V}=\begin{bmatrix} - \mathbf{A_\mathbf{x}}^\text{T} & \boldsymbol{0} \\ \boldsymbol{\alpha_\mathbf{y}} {\boldsymbol{1}_{|\mathbf{x}|}}^\text{T} \mathbf{A_\mathbf{x}}^\text{T} & - \mathbf{A_\mathbf{y}}^\text{T}\end{bmatrix}.
\end{equation}

The subscripts on the matrices indicates their dimensions (e.g., $|\mathbf{x}|\times|\mathbf{y}|$ indicates there are as many rows as immature predator sub-states, and as many columns as mature predator sub-states). Below we also use the notation $\mathbf{I}$ to indicate an appropriately sized identity matrix, and $\mathbf{1}_k$ to indicate a column vector of ones, that is length $k$.

$\mathcal{R}_{pred}$ is the spectral radius (i.e, the largest eigenvalue modulus) of $\mathbf{FV}^{-1}$ \citep{vdDW2002}. Recalling that the general form for the inverse (when it exists) of a block matrix like eq. \eqref{eq:V} is \begin{equation} \label{inverseblockmatrix} {\begin{bmatrix} \mathbf{A} & 0 \\ \mathbf{C} & \mathbf{D} \end{bmatrix}}^{-1}= \; \begin{bmatrix} \mathbf{A}^{-1} & \mathbf{0} \\ -\mathbf{D}^{-1}\mathbf{C}\,\mathbf{A}^{-1} & \mathbf{D}^{-1} \end{bmatrix} \end{equation} for generic, appropriately sized matrix blocks $\mathbf{A}$, $\mathbf{C}$, $\mathbf{D}$. This gives \begin{equation} 
	\mathbf{V}^{-1}=\begin{bmatrix} (-\mathbf{A_x}^\text{T})^{-1} & \mathbf{0} \\ 
		-(-\mathbf{A_y}^\text{T})^{-1} \boldsymbol{\alpha_\mathbf{y}}\, {\mathbf{1}_{|x|}}^\text{T}\mathbf{A_x}^\text{T} (\mathbf{-A_x}^\text{T})^{-1} \quad & (-\mathbf{A_y}^\text{T})^{-1} \end{bmatrix} \end{equation}
which yields that 

\begin{equation} \mathbf{FV}^{-1}=\begin{bmatrix}  \boldsymbol{\alpha_\mathbf{x}}\, {\mathbf{1}_{|\mathbf{y}|}}^\text{T} \chi \, \frac{aN}{h+N}\, (-\mathbf{A_y}^\text{T})^{-1} \boldsymbol{\alpha_\mathbf{y}}\, {\mathbf{1}_{|\mathbf{x}|}}^\text{T} \quad & \boldsymbol{\alpha_\mathbf{x}} {\boldsymbol{1}_{|\mathbf{y}|}}^\text{T} \chi \, \frac{aN}{h+N}\, (-\mathbf{A_y}^\text{T})^{-1} \\ \mathbf{0} & \mathbf{0} \end{bmatrix} \end{equation}

is an upper triangular block matrix with a matrix of zeros as one of its diagonal blocks, so any nonzero eigenvalues will come from the upper left block of the matrix. It follows that  \begin{equation}\mathcal{R}_\text{pred}=\rho \left( \boldsymbol{\alpha_\mathbf{x}} \boldsymbol{1}^\text{T} \chi \, \frac{aN}{h+N}\, (-\mathbf{A_y}^\text{T})^{-1} \boldsymbol{\alpha_\mathbf{y}}\, {\mathbf{1}_{|\mathbf{x}|}}^\text{T}  \right),
\end{equation}

where $\rho(\mathbf{M})$ denotes the spectral radius of matrix $\mathbf{M}$. Note that, according to eqs. \eqref{eq:pt},  \begin{equation} \boldsymbol{1}^\text{T}\, (-\mathbf{A_y}^\text{T})^{-1}\, \boldsymbol{\alpha_\mathbf{y}}={\boldsymbol{\alpha_\mathbf{y}}}^\text{T}\, (-\mathbf{A_y})^{-1}\, \boldsymbol{1} \end{equation} is scalar value and can be interpreted as the mean survival time of predators after they reach maturity. Factoring this quantity and scalar $\chi\,\frac{aN}{h+N}$ out of the spectral radius computation yields 
\begin{equation}
	\mathcal{R}_\text{pred}=\chi\,\frac{aN}{h+N}\;\left({\boldsymbol{\alpha_\mathbf{y}}}^\text{T}\, (-\mathbf{A_y})^{-1}\, \boldsymbol{1}\right)\;\left[\rho \left( \boldsymbol{\alpha_\mathbf{x}}\, {\mathbf{1}_{|\mathbf{x}|}}^\text{T} \right)\right].
\end{equation}

Since $\boldsymbol{\alpha_\mathbf{x}}$ is a column vector, and ${\mathbf{1}_{|\mathbf{x}|}}^\text{T}$ is a row vector of the same length, then the spectral radius of square matrix $\boldsymbol{\alpha_\mathbf{x}} {\mathbf{1}_{|\mathbf{x}|}}^\text{T}$ is\footnote{It is known that an $n\times n$ matrix that can be written as a length $n$ column vector times a length $n$ row vector has $n-1$ zero eigenvalues and the remaining eigenvalue is the dot product of those two vectors.} the dot product of $\boldsymbol{\alpha_\mathbf{x}}$ and ${\mathbf{1}_{|\mathbf{x}|}}$, which equals the sum of the entries in $\boldsymbol{\alpha_\mathbf{x}}$. Recalling the definition of $\alpha_*$ from eqs. \eqref{eq:pt}, define $\alpha_{\mathbf{x}*}=1-\rho(\boldsymbol{\alpha_\mathbf{x}} {\mathbf{1}_{|\mathbf{x}|}}^\text{T})$. Then it follows that  \begin{equation}\mathcal{R}_\text{pred}=\chi\,\frac{aN}{h+N} \left(\boldsymbol{\alpha_\mathbf{y}}\, (-\mathbf{A_y})^{-1}\, \boldsymbol{1}\right) \;\left(1-\alpha_{\mathbf{x}*}\right).\end{equation}
\end{proof}

Note that one would assume that $\alpha_{\mathbf{x}*}=0$. To assume otherwise would require assuming that a fraction of predators perish before entering the immature stage \textendash{} which would be equivalent to normalizing $\boldsymbol{\alpha}$ so that all entries sum to one, and assuming an effective conversion factor of $\chi_*=\chi(1-\alpha_{\mathbf{x}*})$ \textendash{} or adding an appropriate term to the mature predator stage assuming this fraction skips the immature stages altogether and enters directly into the mature states.  Also note that the mean time predators spend as adults incorporates the fraction that survive to reach maturity.

Observe that the above predator reproduction number expression holds for any choice of phase-type distribution assumptions to describe the predator stage structure (which determines the dimension of model eqs. \eqref{eq:RMPT}). Here we have used a somewhat simplistic set of assumptions regarding the predator birth rate and survival of offspring to maturity to more clearly illustrate the process of computing the reproduction number using the next generation operator approach, and how the expression obtained can be interpreted using properties of phase-type distributions and associated Markov chains. 

We next consider the slightly more complex SEIRS model, which further illustrates the implementation of this approach.

\subsection{Generalized SEIRS model with phase-type dwell times in E, I, and R}\label{sec:SEIRSpt}

Consider the SEIRS model eqs. \eqref{eq:SEIRSlct}. We apply the GLCT to generalize this model as follows (cf. the steps used to derive the simpler SEIR model in \citet{HurtadoRichards2021} and the procedure described in \citet{HurtadoRichards2020b}). Assume that the latent period (time spent in state E) follows a phase-type distribution parameterized by a length $k_E$ vector $\boldsymbol{\alpha_E}$ and $k_E\times k_E$ matrix $\mathbf{A_E}$, the infectious period (time spent in state I) follows a phase-type distribution parameterized by a length $k_I$ vector $\boldsymbol{\alpha_I}$ and $k_I\times k_I$ matrix $\mathbf{A_I}$, and the duration of immunity after recovery (time spent in state R) follows a phase-type distribution defined by a length $k_R$ vector $\boldsymbol{\alpha_R}$ and $k_R\times k_R$ matrix $\mathbf{A_R}$.

The resulting mean field ODEs for this generalized SEIRS model are
\begin{subequations} \label{eq:SEIRSpt}\begin{align}
\frac{dS}{dt} =& \; \Lambda(S,\mathbf{x},\mathbf{y},\mathbf{z}) - \mu\,S - \lambda(t)\,S + \underbrace{\big(-\mathbf{1}^\text{T}\mathbf{A_R}^\text{T}\mathbf{z}\big)}_{\text{scalar}}  \\
\frac{d\mathbf{x}}{dt} =& \; \boldsymbol{\alpha_E}\,\lambda(t)\,S + {\mathbf{A_E}}^\text{T}\mathbf{x} - \mu\,\mathbf{x} \\
\frac{d\mathbf{y}}{dt} =& \; \boldsymbol{\alpha_I}\underbrace{\big(-\mathbf{1}^\text{T}\mathbf{A_E}^\text{T}\mathbf{x}\big)}_\text{scalars} + {\mathbf{A_I}}^\text{T}\mathbf{y} - \mu\,\mathbf{y} \\
\frac{d\mathbf{z}}{dt} =& \; \boldsymbol{\alpha_R}\overbrace{\big(-\mathbf{1}^\text{T}\mathbf{A_I}^\text{T}\mathbf{y}\big)} + {\mathbf{A_R}}^\text{T}\mathbf{z} - \mu\,\mathbf{z} .
\end{align}\end{subequations}

Here $\mathbf{x}=[E_1,\ldots,E_{k_E}]^\text{T}$, $\mathbf{y}=[I_1,\ldots,I_{k_I}]^\text{T}$, and $\mathbf{z}=[R_1,\ldots,R_{k_R}]^\text{T}$ are the column vectors of sub-states of E, I, and R, respectively, where $E=\sum E_i$, $I=\sum I_i$, and $R=\sum R_i$. Also, $\boldsymbol\beta=[\beta_1,\ldots,\beta_{k_I}]^\text{T}$ and the force of infection is $\lambda(t) = \beta_1\,I_1(t) + \cdots + \beta_{k_I}\,I_{k_I}(t) =\,\boldsymbol{\beta}\cdot\mathbf{y}=\,\boldsymbol{\beta}^\text{T}\mathbf{y}$.

We have also assumed a general birth rate $\Lambda(S,\mathbf{x},\mathbf{y},\mathbf{z})\geq0$ which we assume is a sufficiently smooth function that also yields a locally asymptotically stable disease free equilibrium (DFE) ($S_0,\mathbf{0},\mathbf{0},\mathbf{0}$) in the absence of the pathogen. The terms labeled as scalars in eqs. \eqref{eq:SEIRSpt} are the sums of terms in each loss rate vector from the different exposed, infectious, and recovered (immune) states (cf. vector $\mathbf{a}$ in eq. \eqref{eq:ratematrix} and see Theorem \ref{th:glct}).

Using the next generation operator approach \citep{vdDW2002} to find $\mathcal{R}_0$ for this model yields the following result, where $\mathbf{G_E} = \mathbf{A_E}-\mu\,\mathbf{I}_{k_E\times k_E}$ and $\mathbf{G_I} = \mathbf{A_I}-\mu\,\mathbf{I}_{k_I\times k_I}$ (we will discuss the interpretation of these matrices below).

\begin{theorem} \label{Th:R0} The DFE for eqs. \eqref{eq:SEIRSpt} is locally asymptotically stable if $\mathcal{R}_0 < 1$, but unstable if $\mathcal{R}_0>1$, where $\mathcal{R}_0$ is given by
\begin{equation}\mathcal{R}_0=\; \mathcal{R}_\text{0,new}  \; \mathcal{P}_\text{E$\to$I}  \end{equation}
where $\mathcal{R}_\text{0,new}$ is the expected number of new infections created by a single infectious individual over the course of the average predator lifetime when introduced to the population at the DFE, given by 
\begin{equation} \mathcal{R}_\text{0,new} = \boldsymbol{\alpha_I}^\text{T}(-\mathbf{G_I}^{-1})(\boldsymbol{\beta}S_0),\end{equation}
and $\mathcal{P}_\text{E$\to$I}$ is the probability of surviving the exposed state (E) and transitioning into the infectious state (I), which is given by the dot product
\begin{equation} \mathcal{P}_\text{E$\to$I} =\; \boldsymbol{\alpha_E}  \cdot ((-\mathbf{G_E}^{-1})(- \mathbf{A_E}\mathbf{1}_{k_E})) .\end{equation}
	
\end{theorem}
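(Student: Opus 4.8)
The plan is to follow the next generation operator construction of \citet{vdDW2002} exactly as in the proof of Theorem \ref{Th:RMR}, taking the exposed and infectious sub-states ($\mathbf{x}$ and $\mathbf{y}$) as the infected compartments and discarding $S$ and the recovered sub-states $\mathbf{z}$ from the linearization. At the DFE these are identically zero, and the waning-immunity feedback $-\mathbf{1}^\text{T}\mathbf{A_R}^\text{T}\mathbf{z}$ enters only the $S$ equation, so $\boldsymbol{\alpha_R}$ and $\mathbf{A_R}$ cannot appear in $\mathcal{R}_0$. First I would split the infected-compartment dynamics into new infections $\mathcal{F}$ and all other transfers $\mathcal{V}=\mathcal{V}^- - \mathcal{V}^+$. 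The only genuinely new infections are the terms $\boldsymbol{\alpha_E}\,\lambda(t)\,S = \boldsymbol{\alpha_E}\,(\boldsymbol{\beta}^\text{T}\mathbf{y})\,S$ entering $\mathbf{x}$; crucially, the progression term $\boldsymbol{\alpha_I}(-\mathbf{1}^\text{T}\mathbf{A_E}^\text{T}\mathbf{x})$ that moves individuals from E into I is a transfer between infected compartments and belongs to $\mathcal{V}^+$, not to $\mathcal{F}$. With $\mathbf{G_E}=\mathbf{A_E}-\mu\mathbf{I}$ and $\mathbf{G_I}=\mathbf{A_I}-\mu\mathbf{I}$, the non-infection dynamics of $\mathbf{x}$ and $\mathbf{y}$ are $\mathbf{G_E}^\text{T}\mathbf{x}$ and $\boldsymbol{\alpha_I}(-\mathbf{1}^\text{T}\mathbf{A_E}^\text{T}\mathbf{x})+\mathbf{G_I}^\text{T}\mathbf{y}$, respectively.

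Next I would differentiate $\mathcal{F}$ and $\mathcal{V}$ with respect to $(\mathbf{x},\mathbf{y})$ at the DFE $(S_0,\mathbf{0},\mathbf{0},\mathbf{0})$ to obtain
\[
\mathbf{F}=\begin{bmatrix} \mathbf{0} & \boldsymbol{\alpha_E}\,\boldsymbol{\beta}^\text{T}S_0 \\ \mathbf{0} & \mathbf{0}\end{bmatrix},\qquad
\mathbf{V}=\begin{bmatrix} -\mathbf{G_E}^\text{T} & \mathbf{0} \\ \boldsymbol{\alpha_I}\,\mathbf{1}^\text{T}\mathbf{A_E}^\text{T} & -\mathbf{G_I}^\text{T}\end{bmatrix}.
\]
Since $\mathbf{V}$ is block lower triangular I would invert it with the same block-inverse identity (eq. \eqref{inverseblockmatrix}) used in Theorem \ref{Th:RMR}, then form $\mathbf{FV}^{-1}$. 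Because the top-left block of $\mathbf{F}$ is zero, $\mathbf{FV}^{-1}$ is block upper triangular with a zero lower-right block, so $\mathcal{R}_0=\rho(\mathbf{FV}^{-1})$ is the spectral radius of its upper-left $k_E\times k_E$ block, namely $-S_0\,\big(\boldsymbol{\beta}^\text{T}(-\mathbf{G_I}^\text{T})^{-1}\boldsymbol{\alpha_I}\big)\,\boldsymbol{\alpha_E}\,\mathbf{1}^\text{T}\mathbf{A_E}^\text{T}(-\mathbf{G_E}^\text{T})^{-1}$.

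Now I would exploit structure: the grouped factor $\boldsymbol{\beta}^\text{T}(-\mathbf{G_I}^\text{T})^{-1}\boldsymbol{\alpha_I}$ is a scalar, so the block is a scalar multiple of the outer product $\boldsymbol{\alpha_E}\,\big(\mathbf{1}^\text{T}\mathbf{A_E}^\text{T}(-\mathbf{G_E}^\text{T})^{-1}\big)$ of a column vector and a row vector. By the rank-one eigenvalue fact already invoked in Theorem \ref{Th:RMR}, its only nonzero eigenvalue is the dot product of those two vectors, giving
\[
\mathcal{R}_0=-S_0\,\big(\boldsymbol{\beta}^\text{T}(-\mathbf{G_I}^\text{T})^{-1}\boldsymbol{\alpha_I}\big)\,\big(\mathbf{1}^\text{T}\mathbf{A_E}^\text{T}(-\mathbf{G_E}^\text{T})^{-1}\boldsymbol{\alpha_E}\big).
\]
Finally I would transpose each scalar factor, using $(-\mathbf{G}^\text{T})^{-1}=\big((-\mathbf{G})^{-1}\big)^\text{T}$, to rewrite the first factor as $\boldsymbol{\alpha_I}^\text{T}(-\mathbf{G_I}^{-1})(\boldsymbol{\beta}S_0)=\mathcal{R}_\text{0,new}$ and the second as $\boldsymbol{\alpha_E}^\text{T}(-\mathbf{G_E}^{-1})(-\mathbf{A_E}\mathbf{1}_{k_E})=\mathcal{P}_\text{E$\to$I}$, the leading minus sign being absorbed into the $-\mathbf{A_E}\mathbf{1}_{k_E}$, which yields $\mathcal{R}_0=\mathcal{R}_\text{0,new}\,\mathcal{P}_\text{E$\to$I}$. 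The stability dichotomy then follows from Theorem 2 of \citet{vdDW2002}.

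The hard part will be the bookkeeping in the $\mathcal{F}/\mathcal{V}$ split, in particular justifying that E-to-I progression is a transfer rather than a new infection so that $\mathcal{F}$ has the required nonnegativity and the remaining van den Driessche--Watmough conditions hold, together with the closing transpose manipulations that convert the raw spectral-radius expression into the interpretable product of $\mathcal{R}_\text{0,new}$ and $\mathcal{P}_\text{E$\to$I}$. Recognizing the outer-product (rank-one) structure of the relevant block is what makes the spectral radius computable in closed form for arbitrary $k_E$ and $k_I$.
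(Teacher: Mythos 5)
Your proposal is correct and follows essentially the same route as the paper's proof: the identical $\mathcal{F}$/$\mathcal{V}$ split (with E-to-I progression placed in $\mathcal{V}^{+}$), the same $\mathbf{F}$ and block-lower-triangular $\mathbf{V}$, the block inverse from eq.~\eqref{inverseblockmatrix}, the rank-one spectral radius reduction, and the closing transposes yielding $\mathcal{R}_0=\mathcal{R}_\text{0,new}\,\mathcal{P}_\text{E$\to$I}$. The only difference is cosmetic: the paper additionally passes through the embedded jump chain (via $\mathbf{Q_E}$, $\mathbf{D_{v_E}}$, and eq.~\eqref{eq:absprob}) to interpret $(-\mathbf{G_E}^{-1})(-\mathbf{A_E}\mathbf{1}_{k_E})$ as a vector of survival probabilities, whereas you arrive at the same factorization by direct algebra.
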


\begin{proof} The right hand side of eqs. \eqref{eq:SEIRSpt} can be written as follows, yielding $\mathcal{F}$ and $\mathcal{V}=\mathcal{V}^{-} -\mathcal{V}^{+}$ which satisfy the requirements of Theorem 2 in \citet{vdDW2002}.
	
\begin{displaymath}
\begin{pmatrix} \dot{\mathbf{x}} \\ \dot{\mathbf{y}} \\\dot{\mathbf{z}} \\ \dot{S} \end{pmatrix}=\underbrace{\begin{pmatrix} \vec{\alpha_E} \lambda(t) S \\ 0 \\ 0 \\ 0 \end{pmatrix}}_{\mathcal{F}} -\left[ 
        \underbrace{\begin{pmatrix} -\mathbf{D}_{\text{diag}(\mathbf{A_E})} \mathbf{x} +\mu \, \mathbf{x} \\ -\mathbf{D}_{\text{diag}(\mathbf{A_I})} \mathbf{y} +\mu \, \mathbf{y}  \\ -\mathbf{D}_{\text{diag}(\mathbf{A_R})} \mathbf{z} +\mu \, \mathbf{z}  \\ \lambda(t) S + \mu S \end{pmatrix}}_{\mathcal{V}^{-}} - 
        \underbrace{\begin{pmatrix} {(\mathbf{A_E}-\mathbf{D}_{\text{diag}(\mathbf{A_E})})}^\text{T} \mathbf{x} \\ \boldsymbol{\alpha_I} (-\mathbf{1}^\text{T}\mathbf{A_{E}}^\text{T} \mathbf{x}) + (\mathbf{A_I}-\mathbf{D}_{\text{diag}(\mathbf{A_I})})^\text{T} \mathbf{y} \\ \boldsymbol{\alpha_R} (-\mathbf{1}^\text{T}\mathbf{A_{I}}^\text{T} \mathbf{y}) + (\mathbf{A_R}-\mathbf{D}_{\text{diag}(\mathbf{A_R})})^\text{T} \mathbf{z} \\ \Lambda(S,\mathbf{x},\mathbf{y},\mathbf{z}) + -\mathbf{1}^\text{T}\mathbf{A_R}^\text{T}\mathbf{z} \end{pmatrix}}_{\mathcal{V}^{+}} \right],
\end{displaymath}
The upper left block ($\mathbf{F}$) of the Jacobian for $\mathcal{F}$ evaluated at the DFE is 
\begin{equation}
\mathbf{F}=\begin{bmatrix} \mathbf{0}_{k_E\times k_E} & \boldsymbol{\alpha_E} \boldsymbol{\beta}^{T}S_0  \\ \mathbf{0}_{k_I\times k_E} & \mathbf{0}_{k_I\times k_I}\end{bmatrix}= 
  \begin{bmatrix} \mathbf{0} & \boldsymbol{\alpha_E}\boldsymbol{\beta}^{T} \\ \mathbf{0} & \mathbf{0} \end{bmatrix} S_0\end{equation} and the upper left block ($\mathbf{V}$) of the Jacobian for $\mathcal{V}$ evaluated at the DFE is \begin{equation} \label{eq:V}
\mathbf{V}=\begin{bmatrix} \mu \mathbf{I} -\mathbf{A_E}^\text{T} & \mathbf{0} \\ \boldsymbol{\alpha_I} \mathbf{1}_{k_E}^\text{T}\mathbf{A_{E}}^\text{T} &  \mu \mathbf{I} -\mathbf{A_I}^\text{T}\end{bmatrix}=
           \begin{bmatrix}  -\mathbf{G_E}^\text{T} & \mathbf{0} \\ \boldsymbol{\alpha_I} \mathbf{1}_{k_E}^\text{T}\mathbf{A_{E}}^\text{T} &  -\mathbf{G_I}^\text{T}\end{bmatrix}. \end{equation}
Using eq. \eqref{inverseblockmatrix} yields
\begin{equation} 
\mathbf{V}^{-1}=\begin{bmatrix} (-\mathbf{G_E}^{-1})^\text{T} & \mathbf{0} \\ 
(-\mathbf{G_I}^{-1})^\text{T} (-\boldsymbol{\alpha_I} \mathbf{1}_{k_E}^\text{T}\mathbf{A_E}^\text{T}) (-\mathbf{G_E}^{-1})^\text{T} & (-\mathbf{G_I}^{-1})^\text{T}  \end{bmatrix}. \end{equation}
The basic reproduction number is the spectral radius of $\mathbf{FV}^{-1}$, i.e., $\mathcal{R}_0=\rho(\mathbf{FV}^{-1})$, where
\begin{equation} \mathbf{FV}^{-1} = \begin{bmatrix} \boldsymbol{\alpha_E} \boldsymbol{\beta}^{T}S_0 \; (-\mathbf{G_I}^{-1})^\text{T} (-\boldsymbol{\alpha_I} \mathbf{1}_{k_E}^\text{T}\mathbf{A_E}^\text{T}) (-\mathbf{G_E}^{-1})^\text{T} & \boldsymbol{\alpha_E} \boldsymbol{\beta}^{T}S_0 \; (-\mathbf{G_I}^{-1})^\text{T}  \\ \mathbf{0} & \mathbf{0} \end{bmatrix}.\end{equation}
Observe that, as in the previous example, $\mathbf{FV}^{-1}$ is an upper triangular block matrix with a diagonal that is all zeros except for the top left block. Therefore,  \begin{equation}\label{eq:SEIRSptR01}\begin{aligned}\mathcal{R}_0=&\;\rho \big( \boldsymbol{\alpha_E} \boldsymbol{\beta}^{T}S_0 \; (-\mathbf{G_I}^{-1})^\text{T} (-\boldsymbol{\alpha_I} \mathbf{1}_{k_E}^\text{T}\mathbf{A_E}^\text{T}) (-\mathbf{G_E}^{-1})^\text{T} \big) \\
=&\;\rho \big( \boldsymbol{\alpha_E} \big(\boldsymbol{\alpha_I}^\text{T}(-\mathbf{G_I}^{-1})(\boldsymbol{\beta}S_0)\big)^\text{T} \;  (- \mathbf{A_E}\mathbf{1}_{k_E})^\text{T} (-\mathbf{G_E}^{-1})^\text{T} \big) \end{aligned}\end{equation}

The above expression for $\mathcal{R}_0$ can be simplified as follows. 

First, recall the expected reward equation, eq. \eqref{eq:Ereward}. The term in eq. \eqref{eq:SEIRSptR01}, \begin{equation}\label{eq:Ireward} \boldsymbol{\alpha_I}^\text{T}(-\mathbf{G_I}^{-1})(\boldsymbol{\beta}S_0)\end{equation} is the expected reward accrued prior to reaching an absorbing state, for a reward Markov process associated with a phase-type distribution parameterized by vector $\boldsymbol{\alpha_I}$ and matrix $\mathbf{G_I} = \mathbf{A_I}-\mu\,\mathbf{I}_{k_I\times k_I}$ with reward rate vector $\boldsymbol\beta\,S_0$. This phase-type distribution describes the duration of time spent in the infected state, and is formally the phase-type distribution that is the minimum (see eq. \eqref{eq:minPT}) of an exponential distribution with rate $\mu$ and a phase-type distribution with parameters $\boldsymbol{\alpha_I}$ and $\mathbf{A_I}$. Therefore, eq. \eqref{eq:Ireward} is \textit{the expected number of new infections created by an average infectious individual over the duration of the mean infectious period}. This quantity is a $1\times1$ matrix, and can be treated as a scalar and factored out of eq. \eqref{eq:SEIRSptR01}. 

If we denote this expected number of new infections, eq. \eqref{eq:Ireward}, as \begin{equation}\label{eq:SEIRSpptR02} \mathcal{R}_\text{0,new} = \boldsymbol{\alpha_I}^\text{T}(-\mathbf{G_I}^{-1})(\boldsymbol{\beta}S_0) = \boldsymbol{\alpha_I}^\text{T}(-(\mathbf{A_I}-\mu\,\mathbf{I}_{k_I\times k_I})^{-1})(\boldsymbol{\beta}S_0),\end{equation} then it follows that eq. \eqref{eq:SEIRSptR01} can be rewritten
\begin{equation}\label{eq:SEIRSpptR03}
\mathcal{R}_0=\; \mathcal{R}_\text{0,new} \; \rho \bigg( \boldsymbol{\alpha_E} (- \mathbf{A_E}\mathbf{1}_{k_E})^\text{T} (-\mathbf{G_E}^{-1})^\text{T} \bigg). \end{equation}

Let $\mathbf{Q_E}$ denote the transient block of the transition probability matrix for the embedded jump process associated with the phase-type distribution parameterized by $\boldsymbol{\alpha_E}$ and $\mathbf{G_E}$. Denote the vector of dwell time rates for the E sub-states by $\mathbf{v_E}=-\text{diag}(\mathbf{G_E})$, and let $\mathbf{D_{v_E}}$ be the diagonal matrix with diagonal $\mathbf{v_E}$. Then eq. \eqref{eq:SEIRSpptR03} can be rewritten using eq. \eqref{eq:AQ} as 
\begin{equation}\label{eq:SEIRSpptR04}\begin{split}
\mathcal{R}_0=&\; \mathcal{R}_\text{0,new} \; \rho \bigg( \boldsymbol{\alpha_E} (- \mathbf{A_E}\mathbf{1}_{k_E})^\text{T} ((\mathbf{D}_\mathbf{v_E}(\mathbf{I}-\mathbf{Q_E}))^{-1})^\text{T} \bigg) \\
=&\;\mathcal{R}_\text{0,new} \; \rho \bigg( \boldsymbol{\alpha_E} ((\mathbf{I}-\mathbf{Q_E})^{-1} {\mathbf{D}_\mathbf{v_E}}^{-1}(- \mathbf{A_E}\mathbf{1}_{k_E}))^\text{T} \bigg) \\
\end{split}\end{equation}

Let $r_i$ denote the loss rate from the $i^\text{th}$ sub-state of E excluding deaths (i.e., assuming $\mu =0$), i.e., it is the $i^\text{th}$ diagonal entry of $\mathbf{A}_E$. Then the $i^\text{th}$ entry of $\mathbf{v_E}$ is $r_i+\mu$, and so by eq. \eqref{eq:q} the $i^\text{th}$ entry in column vector \begin{equation}\label{eq:blarg}
{\mathbf{D}_\mathbf{v_E}}^{-1}(- \mathbf{A_E}\mathbf{1}_{k_E})
\end{equation} is the probability of leaving state E from the $i^\text{th}$ sub-state of E, times the probability of then entering state I (given by $r_i/(r_i+\mu)$) as opposed to entering the deceased state (not tracked in the model). Thus, eq. \eqref{eq:blarg} is the column of matrix $\mathbf{R}$ (recall eq. \eqref{eq:absprob}) corresponding to the \textit{infected} (as opposed to \textit{deceased}) absorbing state. 

Thus, according to eq. \eqref{eq:absprob}, the column vector in which the $i^\text{th}$ entry is the probability of surviving the exposed state (E) and entering the infectious state (I), given that the initial state was the $i^\text{th}$ transient state, is the vector \begin{equation}\label{eq:UE} \mathbf{U_E} =\; ((\mathbf{I}-\mathbf{Q_E})^{-1} {\mathbf{D}_\mathbf{v_E}}^{-1}(- \mathbf{A_E}\mathbf{1}_{k_E}). \end{equation}

Therefore, it follows that eq. \eqref{eq:SEIRSpptR04} can be written in terms of these survival probabilities $\mathbf{U_E}$ as 
\begin{equation}
\mathcal{R}_0=\;\mathcal{R}_\text{0,new} \; \rho \bigg( \boldsymbol{\alpha_E} (\mathbf{U_E})^\text{T} \bigg)  = \; \mathcal{R}_\text{0,new} \; \boldsymbol{\alpha_E} \cdot \mathbf{U_E}   
\end{equation}
where the spectral radius of the scalar product  $\rho \big( \boldsymbol{\alpha_E} (\mathbf{U_E})^\text{T} \big)=\boldsymbol{\alpha_E} \cdot \mathbf{U_E}$ is the sum of each initial state probability, $\alpha_i$, times the exposed-state survival probability, $U_{\mathbf{E}i}$. Thus,  \begin{equation}
\mathcal{P}_{E \to I}=\boldsymbol{\alpha_E} \cdot \mathbf{U_E}
\end{equation} is the overall probability of surviving the exposed state E and transitioning to the infectious state I, and therefore

\begin{equation}\begin{split}
\mathcal{R}_0=\;& \underbrace{\big(\boldsymbol{\alpha_I}^\text{T}(-\mathbf{G_I}^{-1})(\boldsymbol{\beta}S_0)\big)}_{\mathcal{R}_{0,new}}  \underbrace{\boldsymbol{\alpha_E}  \cdot ((-\mathbf{G_E}^{-1})(- \mathbf{A_E}\mathbf{1}_{k_E}))}_{\mathcal{P}_{E \to I}} \\
 =\;& \underbrace{\big(\boldsymbol{\alpha_I}^\text{T}(-(\mathbf{A_I}-\mu\,\mathbf{I}_{k_I\times k_I})^{-1})(\boldsymbol{\beta}S_0)\big)}_{\mathcal{R}_{0,new}}  \underbrace{\boldsymbol{\alpha_E}  \cdot ((-(\mathbf{A_E}-\mu\,\mathbf{I}_{k_E\times k_E})^{-1})(- \mathbf{A_E}\mathbf{1}_{k_E}))}_{\mathcal{P}_{E \to I}}.  
\end{split}
\end{equation}

\end{proof}

\subsection{Example: $\mathcal{R}_0$ for the SEIRS model with Erlang latent and infectious periods}\label{sec:SEIRSconjproof}

Observe that the SEIRS model with Erlang latent and infectious periods, and exponentially distributed duration of immunity, eqs. \eqref{eq:SEIRSlct}, is a special case of eqs. \eqref{eq:SEIRSpt}.  Therefore, the following corollary to Theorem \ref{Th:R0} proves the previous conjecture about eqs. \eqref{eq:SEIRSlct} and eq. \eqref{eq:SEIRSlctR0}.

\begin{corollary} The basic reproduction number $\mathcal{R}_0$ for eqs. \eqref{eq:SEIRSlct} \textendash{} the SEIRS model with mortality rate $\mu$, exponentially distributed duration of immunity, and Erlang latent and infectious periods with respective means $\tau_E$ and $\tau_I$ and arbitrary integer-valued shape parameters $k_E,k_I\geq 1$ \textendash{} is given by 

\begin{equation}  \mathcal{R}_\text{0} =  \beta\,S_0\;\bigg(\frac{1}{r_I + \mu} + \frac{r_I}{(r_I + \mu)^2} + \frac{r_I^2}{(r_I + \mu)^3} + \cdots + \frac{r_I^{k_I-1}}{(r_I + \mu)^{k_I}}\bigg)\,\bigg(\frac{r_E}{r_E + \mu}\bigg)^{k_E}.  \end{equation}	

where $r_E=k_E/\tau_E$ and $r_I=k_I/\tau_I$.  The DFE is locally asymptotically stable if $\mathcal{R}_0<1$ and unstable if $\mathcal{R}_0>1$.

The first term, $\beta\,S_0$, is the expected rate of new infections per unit time per infected individual added to a population at the DFE. The second term is the probability of surviving the exposed class (E) and entering the infected class (I). The third term is the expected value of distribution of time sindividuals spend in the infected class \textendash{} which is a Coxian distribution defined as the minimum of an exponential distribution with rate $\mu$ and the Erlang infectious period distribution.
\end{corollary}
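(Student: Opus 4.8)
The plan is to recognize eqs. \eqref{eq:SEIRSlct} as the special case of the general phase-type SEIRS model eqs. \eqref{eq:SEIRSpt} in which E and I have Erlang dwell-time distributions (and R an exponential one), so that Theorem \ref{Th:R0} applies verbatim and the entire task reduces to evaluating the two factors $\mathcal{R}_\text{0,new}$ and $\mathcal{P}_\text{E$\to$I}$ on this concrete phase-type data. Concretely, I would set $\boldsymbol{\alpha_E}=\mathbf{e}_1\in\mathbb{R}^{k_E}$ and $\boldsymbol{\alpha_I}=\mathbf{e}_1\in\mathbb{R}^{k_I}$ (first standard basis vectors), take $\mathbf{A_E}$ and $\mathbf{A_I}$ to be the bidiagonal Erlang matrices of eq. \eqref{eq:alphaA} with common rates $r_E=k_E/\tau_E$ and $r_I=k_I/\tau_I$, and note that the mass-action force of infection $\lambda=\beta\,I=\beta\,\mathbf{1}^\text{T}\mathbf{y}$ corresponds to $\boldsymbol{\beta}=\beta\,\mathbf{1}_{k_I}$. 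Then $\mathbf{G_E}=\mathbf{A_E}-\mu\mathbf{I}$ and $\mathbf{G_I}=\mathbf{A_I}-\mu\mathbf{I}$ are upper bidiagonal with diagonal entries $-(r_E+\mu)$ (resp. $-(r_I+\mu)$) and superdiagonal entries $r_E$ (resp. $r_I$).

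Next I would compute the two factors explicitly. For $\mathcal{P}_\text{E$\to$I}=\boldsymbol{\alpha_E}\cdot((-\mathbf{G_E}^{-1})(-\mathbf{A_E}\mathbf{1}_{k_E}))$, the key observation is that the rows of $\mathbf{A_E}$ sum to zero except the last, so $-\mathbf{A_E}\mathbf{1}_{k_E}=r_E\,\mathbf{e}_{k_E}$; hence $\mathcal{P}_\text{E$\to$I}=r_E\,[(-\mathbf{G_E})^{-1}]_{1,k_E}$. To evaluate that single entry I would write $-\mathbf{G_E}=(r_E+\mu)(\mathbf{I}-\tfrac{r_E}{r_E+\mu}\mathbf{N})$, where $\mathbf{N}$ is the nilpotent shift matrix whose only nonzero entries are ones on the superdiagonal, and expand via the finite Neumann series $(\mathbf{I}-c\mathbf{N})^{-1}=\sum_{j\ge0}c^j\mathbf{N}^j$. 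Since $\mathbf{N}^{k_E-1}$ carries the only contribution to the $(1,k_E)$ position, this gives $[(-\mathbf{G_E})^{-1}]_{1,k_E}=r_E^{k_E-1}/(r_E+\mu)^{k_E}$ and therefore $\mathcal{P}_\text{E$\to$I}=\big(r_E/(r_E+\mu)\big)^{k_E}$. The same device applied to $\mathcal{R}_\text{0,new}=\beta S_0\,\boldsymbol{\alpha_I}^\text{T}(-\mathbf{G_I}^{-1})\mathbf{1}_{k_I}$ reduces it to $\beta S_0$ times the first-row sum of $(-\mathbf{G_I})^{-1}$, and reading off the $(1,j)$ entries $r_I^{j-1}/(r_I+\mu)^j$ yields exactly the bracketed sum $\sum_{j=1}^{k_I} r_I^{j-1}/(r_I+\mu)^j$.

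Finally I would assemble $\mathcal{R}_0=\mathcal{R}_\text{0,new}\,\mathcal{P}_\text{E$\to$I}$ to recover the stated formula, with the stability dichotomy following immediately from Theorem \ref{Th:R0}. I expect the only real obstacle to be the clean evaluation of the inverse-matrix entries; rather than inverting $-\mathbf{G}$ by hand, the Neumann-series-in-the-shift-matrix trick makes both quantities fall out transparently by exploiting the nilpotency $\mathbf{N}^{k}=\mathbf{0}$, avoiding cofactor bookkeeping. Everything else is routine: verifying that the Erlang data indeed matches eqs. \eqref{eq:SEIRSpt}, and confirming that the Coxian reading of the infectious-period distribution (the minimum of an $\mathrm{Exp}(\mu)$ clock and the Erlang period) is consistent with the factor $\mathcal{R}_\text{0,new}$ generated by $\mathbf{G_I}=\mathbf{A_I}-\mu\mathbf{I}$, exactly as in the proof of Theorem \ref{Th:R0}.
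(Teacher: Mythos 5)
Your proposal is correct and follows essentially the same route as the paper: recognize eqs. \eqref{eq:SEIRSlct} as the Erlang/exponential special case of eqs. \eqref{eq:SEIRSpt}, invoke Theorem \ref{Th:R0}, observe $-\mathbf{A_E}\mathbf{1}_{k_E}=r_E\,\mathbf{e}_{k_E}$, and evaluate the two factors $\mathcal{R}_\text{0,new}$ and $\mathcal{P}_\text{E$\to$I}$ on the bidiagonal phase-type data. The only difference is cosmetic: where the paper reads the needed entries off the explicit form of $-\mathbf{G}^{-1}$ verified by direct inspection (Appendix \ref{A:minExpErlang}), you derive them from the finite Neumann series $(-\mathbf{G})^{-1}=\frac{1}{r+\mu}\sum_{j\geq 0}\big(\tfrac{r}{r+\mu}\big)^{j}\mathbf{N}^{j}$ in the nilpotent shift matrix, a slightly tidier justification of the same computation.
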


\begin{proof} Eqs. \eqref{eq:SEIRSlct} are a particular case of eqs. \eqref{eq:SEIRSpt} with constant birthrate $\Lambda(S,\mathbf{x},\mathbf{y},R)=\Lambda$ and force of infection $\lambda(t) = \beta\,\sum_{i=0}^{k_I} I_i(t)$. As in \citet{HurtadoRichards2021}, the Erlang distributed latent period is phase-type distributed with parameters (cf. eq. \eqref{eq:alphaA})

\begin{equation}\boldsymbol{\alpha_E} = \begin{bmatrix} 1 \\ 0 \\ \vdots \\ 0  \end{bmatrix}  \qquad \text{and} \qquad \mathbf{A_E} = \begin{bmatrix} -r_E & r_E & 0 & \cdots & 0 \\
0 & -r_E & r_E  & \ddots & 0 \\
\vdots & \ddots & \ddots  & \ddots & \ddots \\
0 & 0 & \ddots &  -r_E & r_E \\
0 & 0 & \cdots &  0 & -r_E
\end{bmatrix}. \end{equation}

The Erlang distributed infectious period can be similarly parameterized using 

\begin{equation}\boldsymbol{\alpha_I} = \begin{bmatrix} 1 \\ 0 \\ \vdots \\ 0  \end{bmatrix}  \qquad \text{and} \qquad \mathbf{A_I} = \begin{bmatrix} -r_I & r_I & 0 & \cdots & 0 \\
0 & -r_I & r_I  & \ddots & 0 \\
\vdots & \ddots & \ddots  & \ddots & \ddots \\
0 & 0 & \ddots &  -r_I & r_I \\
0 & 0 & \cdots &  0 & -r_I
\end{bmatrix} \end{equation}

For the exponentially distributed duration of immunity, $\boldsymbol{\alpha_R}=[1]$ and $\mathbf{A_R}=[-\epsilon]$. 

Let $\mathbf{G_I}$ denote the matrix given by $\mathbf{A_I}$ with additional $-\mu$ terms along the diagonal ($\mathbf{G_I}=\mathbf{A_I} - \mu\mathbf{I}$) and similarly $\mathbf{G_E}=\mathbf{A_E} - \mu\mathbf{I}$. Then, by Theorem \ref{Th:minExpErlang} in Appendix \ref{A:minExpErlang}, the parameter pairs $\boldsymbol{\alpha_E}$, $\mathbf{G_E}$ and $\boldsymbol{\alpha_I}$, $\mathbf{G_I}$ define two Coxian distributions describing the effective latent period and infectious period distributions after accounting for losses from deaths. Observe that since all $\beta_i=\beta$ then the vector $\boldsymbol{\beta}=\beta\mathbf{1}$. Then by Theorem \ref{Th:R0} it follows that $\mathcal{R}_0$ is given by the product of 

\begin{equation}\begin{split} \mathcal{R}_\text{0,new} =&\; \boldsymbol{\alpha_I}^\text{T}(-\mathbf{G_I}^{-1})(\boldsymbol{\beta}S_0) \\
                                                       =&\; \beta S_0\,\boldsymbol{\alpha_I}^\text{T}(-\mathbf{G_I}^{-1})\mathbf{1}\end{split}
                                                       \end{equation}
and

\begin{equation}\begin{split} \mathcal{P}_\text{E$\to$I} =&\; \boldsymbol{\alpha_E}  \cdot ((-\mathbf{G_E}^{-1})(- \mathbf{A_E}\mathbf{1}_{k_E})) \\
                       =&\; \boldsymbol{\alpha_E}  \cdot ((-\mathbf{G_E}^{-1})([0 \cdots 0\; 1]^\text{T}r_E)). \end{split}\end{equation}

If we let $Y_I$ denote the time spent in state I, then by either eq. \eqref{eq:pt} or eq. \eqref{eq:Ereward}, 

\begin{equation}
\text{E}(Y_I) =\; \boldsymbol{\alpha_I}^\text{T}(-\mathbf{G_I}^{-1})\mathbf{1} =\;  \bigg(\frac{1}{r_I + \mu} + \frac{r_I}{(r_I + \mu)^2} + \frac{r_I^2}{(r_I + \mu)^3} + \cdots + \frac{r_I^{k_I-1}}{(r_I + \mu)^{k_I}}\bigg).
\end{equation}

Since $-\mathbf{G_E}^{-1}$ is of the form (see Appendix \ref{A:minExpErlang})

\begin{equation} -\mathbf{G_E}^{-1} = \frac{1}{r_E + \mu}\begin{bmatrix} 1 & \frac{r_E}{r_E + \mu} & \big(\frac{r_E}{r_E + \mu}\big)^2 & \cdots & \big(\frac{r_E}{r_E + \mu}\big)^{k_E-1} \\
0 & 1 & \frac{r_E}{r_E + \mu} & \ddots & \big(\frac{r_E}{r_E + \mu}\big)^{k_E-2} \\
\vdots & \ddots & \ddots  & \ddots & \ddots \\
0 & 0 & \ddots &  1 & \frac{r_E}{r_E + \mu}  \\
0 & 0 & \cdots &  0 & 1
\end{bmatrix} \end{equation}

then 
\begin{equation}
(-\mathbf{G_E}^{-1})\begin{bmatrix} 0 \\ \vdots \\ 0 \\ 1\end{bmatrix} r_E  = \; 
\begin{bmatrix} \big(\frac{r_E}{r_E + \mu}\big)^{k_E} \\
\big(\frac{r_E}{r_E + \mu}\big)^{k_E-1} \\
\vdots \\
\big(\frac{r_E}{r_E + \mu}\big)^2  \\
\frac{r_E}{r_E + \mu}
\end{bmatrix}
\end{equation}

and thus, \begin{equation} \mathcal{P}_\text{E$\to$I} =\; \boldsymbol{\alpha_E} \cdot 
(-\mathbf{G_E}^{-1})\begin{bmatrix} 0 \\ \vdots \\ 0 \\ 1\end{bmatrix} r_E  = \;  \begin{bmatrix} 1\\0 \\ \vdots \\ 0 \end{bmatrix} \cdot  \begin{bmatrix} \big(\frac{r_E}{r_E + \mu}\big)^{k_E} \\
\big(\frac{r_E}{r_E + \mu}\big)^{k_E-1} \\
\vdots \\
\big(\frac{r_E}{r_E + \mu}\big)^2  \\
\frac{r_E}{r_E + \mu}
\end{bmatrix} =\; \bigg(\frac{r_E}{r_E + \mu}\bigg)^{k_E}.
\end{equation}

It follows that \begin{equation}  \mathcal{R}_\text{0} =  \beta\,S_0\; \bigg(\frac{1}{r_I + \mu} + \frac{r_I}{(r_I + \mu)^2} + \frac{r_I^2}{(r_I + \mu)^3} + \cdots + \frac{r_I^{k_I-1}}{(r_I + \mu)^{k_I}}\bigg)\,\bigg(\frac{r_E}{r_E + \mu}\bigg)^{k_E}.  \end{equation}

\end{proof}

\section{Discussion}\label{sec:discussion}

Here we have shown how reproduction numbers for ODE models of arbitrary finite dimension can be derived and interpreted using the generalized linear chain trick \citep[GLCT;][]{Hurtado2019,HurtadoRichards2020b} in conjunction with continuous time Markov chain (CTMC) theory \citep{Resnick2002,Bailey1990,Bladt2017,Reinecke2012a,Taylor1998}, the associated theory of phase-type (i.e., CTMC absorption time) distributions \citep{Bladt2017ch3, Reinecke2012a}, and the next generation operator method \citep{vdDW2002,Diekmann1990,Diekmann2009}. This approach can yield general expressions for basic reproduction numbers ($\mathcal{R}_0$) in families of epidemic models of arbitrary finite dimension, like the SEIRS model with Erlang latent and infectious periods. We also showed that these techniques can be used to find similar threshold quantities (population reproduction numbers) in single- or multi-species ecological models.  

The success of this approach relies on two related features of mean field ODE models that were derived (or could be interpreted) using the GLCT. The first is that formulating the model (or family of models) in a matrix-vector form consistent with the GLCT yields equations that are agnostic of the actual number of model state variables and their corresponding ODE equations (see also \citet{Hyman2005,Bonzi2010}). That form can then be exploited during the application of the next generation operator approach for computing reproduction numbers. The second feature of such models is that the resulting general reproduction number expressions can be simplified and interpreted through the lens of phase-type distributions and Markov chain theory, allowing for the kinds of insights and interpretations that can be important in applications. 

We have also introduced a novel family of predator-prey models with stage structure in the predator population, and an SEIRS family of contagion models with phase-type distributed latent periods, infectious periods, and duration of immunity. 

This general SEIRS model framework could be further generalized (e.g., to consider other assumptions about the functional form of the force of infection, or less simplistic birth and death processes), and some such extensions are already being considered (Hurtado \& Kiefer, \textit{in prep}), however, there are many existing SEIR-type models (or their stochastic analogs) used in applications that are (or are approximately) special cases of this model. These include models currently being applied to the ongoing SARS-CoV-2 pandemic, for example \citet{Bertozzi2020,Renardy2020,Nande2021,Drake2021,Giordano2020} and additional models mentioned in works such as \citet{Cramer2021}. To illustrate the utility of this approach for the mathematical analysis of more traditional (LCT-based) models, and not just for model derivations, we obtained a general $\mathcal{R}_0$ expression for this general family of SEIRS models of arbitrarily many (finite) dimensions.

In ecological applications, it is relatively uncommon to see the next generation operator approach used to compute population reproduction numbers in, e.g., multispecies population models \citep[although, see ][]{Duffy2016}. This may be a reflection of the somewhat low-dimensional nature of these models (or the single-species components of such models), or perhaps it is a consequence of these quantities traditionally being derived during a more direct equilibrium stability analysis of different ecologically relevant steady states. However, when models are extended to include individual heterogeneity in the form of different discrete individual types or sub-populations, or there is a discrete stage structure added to these populations \textit{a la} the LCT or GLCT, the approach outlined above may prove to be a useful addition to the ecological modeler's toolkit. 

It is also worth noting that there are other threshold quantities, such as type reproduction numbers \citep{Heesterbeek2015,Heesterbeek2007,Roberts2003,Shuai2012}, and methods for finding these quantities might also be able to exploit the matrix-vector structure of GLCT-derived models in ways that yield general results comparable to the analyses and results presented here. Furthermore, $\mathcal{R}_0$ can be obtained by other means, which sometimes yield different $\mathcal{R}_0$ expressions (but with the same threshold behavior near 1), especially for age structured models with differential sensitivity and multi-host (e.g., vector-born) diseases \citep{Hyman2000,Hyman2005,Yang2014,Heffernan2005}. Nonetheless, analyses that employ these other methods to find reproduction numbers may still benefit from using the matrix-vector form of models consistent with the GLCT, as well as the application of CTMC and phase-type distribution theory in the interpretation of such results.

Not only can the phase-type form of such models be an useful analytically, but it can be used to speed up computations in some cases, as shown in \citet{HurtadoRichards2021}. Since the matrix and vector parameters of a phase-type distribution can be estimated from data \citep{BuTools2,BuToolswww}, the GLCT can also be used to build approximate empirical dwell time distributions into ODE models (e.g., see \citet{Kim2019} for an example where Coxian distributed latent periods are derived from data and used to formulate a similar SEIR model to the more general one introduced above). Sometimes, it may be that other modeling approaches are more appropriate, e.g., using a distributed delay model based on integral or integro-differential equations can sometimes be preferable \citep[e.g., see][]{Cassidy2019,Cassidy2020arxiv2SIAM,MacDonald1978,MacDonald1989,Hu2018,Cushing1977,Rost2008}. We hope that, in some such cases, these GLCT-based ODE models may still serve as useful approximations or special cases of these more general models.

Mean field ODE models are used throughout the sciences, and those equations reflect much of the structure associated with their corresponding (often unwritten) individual-based, stochastic model analogs. The GLCT provides one way to more clearly see these deeper connections between mean field ODEs and their stochastic counterparts. Here we have attempted to highlight how the GLCT can be a useful tool for developing and analyzing mean field ODE models by leveraging existing theory, statistical tools, and intuition from CTMCs, phase-type distributions, and related stochastic processes. We hope others find these techniques to be useful for the development, analysis, and application of mean field dynamical systems models. 

\textbf{Acknowledgments:} The authors thank Dr. Deena Schmidt, the Mathematical Biology lab group at UNR, Dr. Marisa Eisenberg, and Dr. Michael Cortez for conversations and comments that improved this manuscript.

\textbf{Disclosure statement:} The authors declare that they have no conflict of interest.

\textbf{Funding:} This work was supported by a grant awarded to PJH by the Sloan Scholars Mentoring Network of the Social Science Research Council with funds provided by the Alfred P. Sloan Foundation; and this material is based upon work supported by the National Science Foundation under Grant No. DEB-1929522.

\begin{appendices} 
\numberwithin{equation}{section}
\renewcommand{\theequation}{\Alph{section}\arabic{equation}}\setcounter{equation}{0}

\section{Expected reward for phase-type distributions} \label{A:reward}

The following is a known result, but is included here for completeness and for the benefit of readers unfamiliar with phase-type distributions. 

\begin{theorem} Suppose $g_i$ is the \textit{reward rate} for time spent in the $i^\text{th}$ state ($T_i$) such that the cumulative reward is $g_i\,T_i$. Let $W$ be the reward accrued up to hitting the absorbing state in a CTMC with $n$ transient states and a single absorbing state, with corresponding phase-type distribution given by $\boldsymbol{\alpha}$ and $\mathbf{A}$.  Then $$\mathbf{E}(W) =\; \boldsymbol\alpha^\text{T} (\mathbf{-A^{-1}}) \mathbf{g}.$$
\end{theorem}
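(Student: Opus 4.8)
The plan is to express the total reward as a linear functional of the expected occupation times and then evaluate a matrix-exponential integral. Let $X_t$ denote the state of the absorbing CTMC at time $t$, and let $T_i$ be the total (random) time spent in the $i^\text{th}$ transient state before absorption. By definition of the reward process, $W=\sum_{i=1}^{n} g_i\,T_i$, so linearity of expectation gives $\mathbf{E}(W)=\sum_{i=1}^{n} g_i\,\mathbf{E}(T_i)=\mathbf{t}^\text{T}\mathbf{g}$, where $t_i=\mathbf{E}(T_i)$. The problem thus reduces to computing the vector of expected occupation times $\mathbf{t}$.

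Next I would write each occupation time as a time integral of an indicator and push the expectation inside. Since $T_i=\int_0^\infty \1{X_t=i}\,dt$ and the integrand is nonnegative, Tonelli's theorem gives $\mathbf{E}(T_i)=\int_0^\infty P(X_t=i)\,dt$. For an absorbing CTMC with transient generator block $\mathbf{A}$ and initial transient distribution $\boldsymbol{\alpha}$, the sub-probabilities of occupying the transient states evolve as $\boldsymbol{\alpha}^\text{T} e^{\mathbf{A}t}$ (the same object underlying the density and CDF in eqs. \eqref{eq:pt}), so $P(X_t=i)$ is the $i^\text{th}$ entry of $\boldsymbol{\alpha}^\text{T} e^{\mathbf{A}t}$. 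Stacking the entries yields $\mathbf{t}^\text{T}=\boldsymbol{\alpha}^\text{T}\int_0^\infty e^{\mathbf{A}t}\,dt$, and hence $\mathbf{E}(W)=\boldsymbol{\alpha}^\text{T}\big(\int_0^\infty e^{\mathbf{A}t}\,dt\big)\mathbf{g}$.

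The only remaining step is to evaluate the matrix integral. Because $\frac{d}{dt}\big(\mathbf{A}^{-1}e^{\mathbf{A}t}\big)=e^{\mathbf{A}t}$, the fundamental theorem of calculus gives $\int_0^\infty e^{\mathbf{A}t}\,dt=\big[\mathbf{A}^{-1}e^{\mathbf{A}t}\big]_0^\infty=\mathbf{0}-\mathbf{A}^{-1}=-\mathbf{A}^{-1}$, provided $e^{\mathbf{A}t}\to\mathbf{0}$ as $t\to\infty$. Substituting gives $\mathbf{E}(W)=\boldsymbol{\alpha}^\text{T}(-\mathbf{A}^{-1})\mathbf{g}$, as claimed; note that the mean absorption time (the $j=1$ case of eq. \eqref{eq:pt}) is recovered by taking $\mathbf{g}=\mathbf{1}$.

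I expect the main obstacle to be rigorously justifying that the improper integral converges and equals $-\mathbf{A}^{-1}$, which rests on two structural facts about $\mathbf{A}$: that it is nonsingular, and that all its eigenvalues have strictly negative real part (so that $e^{\mathbf{A}t}\to\mathbf{0}$). Both follow from $\mathbf{A}$ being the transient block of an absorbing-CTMC generator in which every transient state can reach the absorbing state, which makes $\mathbf{A}$ a stable, nonsingular subgenerator (an invertible M-matrix). An alternative, essentially one-line route bypasses this analysis by directly invoking the stated interpretation of the Green matrix $\mathbf{U}=-\mathbf{A}^{-1}$, whose entry $u_{ij}$ is the expected time spent in transient state $j$ starting from $i$; then $\mathbf{t}^\text{T}=\boldsymbol{\alpha}^\text{T}\mathbf{U}$ immediately, and $\mathbf{E}(W)=\boldsymbol{\alpha}^\text{T}(-\mathbf{A}^{-1})\mathbf{g}$.
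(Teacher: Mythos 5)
Your proof is correct, but it takes a genuinely different route from the paper's. The paper's proof is a two-line argument that conditions on the initial state and directly invokes the known interpretation of the Green matrix entries, $u_{ij}=\mathbf{E}(T_j\,|\,\text{start in }i)$ with $\mathbf{U}=-\mathbf{A}^{-1}$, citing Theorem 3.1.16 of Bladt and Nielsen (2017) --- essentially the ``alternative one-line route'' you mention at the end. Your primary argument instead proves the underlying occupation-time identity from scratch: writing $T_i=\int_0^\infty \mathds{1}_{[X_t=i]}\,dt$, applying Tonelli to get $\mathbf{E}(T_i)=\int_0^\infty P(X_t=i)\,dt$, identifying the transient sub-probabilities as $\boldsymbol{\alpha}^\text{T}e^{\mathbf{A}t}$, and evaluating $\int_0^\infty e^{\mathbf{A}t}\,dt=-\mathbf{A}^{-1}$. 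What your approach buys is self-containedness and transparency about the hypotheses the paper leaves implicit in its citation: you correctly isolate that the whole argument rests on $\mathbf{A}$ being a nonsingular subgenerator with eigenvalues of strictly negative real part (equivalently, absorption being almost sure), which justifies both the convergence of the improper integral and the finiteness needed for the expectation-swapping. What the paper's route buys is brevity and a clean reduction to standard phase-type theory, at the cost of deferring exactly those structural facts to the reference. Both arguments are sound and arrive at the same identity, with your remark that $\mathbf{g}=\mathbf{1}$ recovers the mean absorption time matching the paper's observation following eq.~\eqref{eq:Ereward}.
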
 \begin{proof} Following Theorem 3.1.16 in \citet{Bladt2017}, 
\begin{align*}
\mathbf{E}(W)=&\;\sum_{i=1}^n \alpha_i \mathbf{E}\bigg(\sum_{j=1}^n T_j\,g_j\bigg|\text{initial state = $i^\text{th}$ state}\bigg) \\
             =&\; \sum_{i=1}^n \alpha_i \sum_{j=1}^n u_{ij}g_j = \boldsymbol\alpha^\text{T} \mathbf{U} \mathbf{g} 
             =\;  \boldsymbol\alpha^\text{T} (\mathbf{-A^{-1}}) \mathbf{g}.
\end{align*}
\end{proof}

\section{Minimum of an exponential and (generalized) Erlang distribution is Coxian} \label{A:minExpErlang}

The following is a known result, but is included here for completeness and for the benefit of readers unfamiliar with phase-type distributions. 

\begin{theorem} \label{Th:minExpErlang} Let random variable $Y$ be the minimum of an exponentially distributed random variable with rate $\mu$, and an independent generalized Erlang distributed (hypoexponential) random variable (i.e., the sum of $k$ independent exponentially distributed random variables) with parameters $\lambda_1$, $\ldots$, $\lambda_k$.  Then $Y$ obeys a Coxian distribution parameterized by 
	
	\begin{equation} \boldsymbol{\alpha} = \begin{bmatrix} 1 \\ 0 \\ \vdots \\ 0  \end{bmatrix}  \qquad \text{and} \qquad  \mathbf{G} = \begin{bmatrix} -\lambda_1 - \mu & \lambda_1 & 0 & \cdots & 0 \\
	0 & -\lambda_2 - \mu & \lambda_2  & \ddots & 0 \\
	\vdots & \ddots & \ddots  & \ddots & \ddots \\
	0 & 0 & \ddots &  -\lambda_{k-1} - \mu & \lambda_{k-1} \\
	0 & 0 & \cdots &  0 & -\lambda_k - \mu
	\end{bmatrix} . \end{equation}
	and the mean value of $Y$ is given by	
	\begin{equation}
	\text{E}(Y) =\;    \frac1{\lambda_1+\mu}  + \frac1{\lambda_2+\mu} \bigg(\frac{\lambda_1}{\lambda_1+\mu}\bigg) + \frac1{\lambda_3+\mu}\bigg(\frac{\lambda_1}{\lambda_1+\mu}\, \frac{\lambda_2}{\lambda_2+\mu}\bigg)  + \cdots + \frac1{\lambda_k+\mu}\, \prod_{i=1}^{k-1} \frac{\lambda_i}{\lambda_i+\mu}
	\end{equation}
	
\end{theorem}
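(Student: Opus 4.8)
The plan is to build $Y$ as a phase-type random variable by combining the known phase-type representations of its two ingredients, then read off both the claimed parameters and the mean directly from the machinery already developed in the excerpt. First I would record the standard representations: the exponential with rate $\mu$ is phase-type with $\boldsymbol{\alpha}_2=[1]$ and $\mathbf{A}_2=[-\mu]$, while the generalized Erlang with rates $\lambda_1,\ldots,\lambda_k$ is phase-type with $\boldsymbol{\alpha}_1=[1,0,\ldots,0]^\text{T}$ and the upper-bidiagonal matrix $\mathbf{A}_1$ carrying $-\lambda_i$ on the diagonal and $\lambda_i$ on the superdiagonal (exactly eq.~\eqref{eq:alphaA}). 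Since $Y$ is the minimum of these two independent variables, the closure result eq.~\eqref{eq:minPT} applies and gives $\boldsymbol{\alpha}=\boldsymbol{\alpha}_1\otimes\boldsymbol{\alpha}_2$ and $\mathbf{G}=\mathbf{A}_1\oplus\mathbf{A}_2$.

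The next step is to evaluate these Kronecker expressions, which collapse because the exponential factor is one-dimensional. The Kronecker product $\boldsymbol{\alpha}_1\otimes[1]$ is just $\boldsymbol{\alpha}_1=[1,0,\ldots,0]^\text{T}$, matching the claimed $\boldsymbol{\alpha}$. The Kronecker sum is $\mathbf{A}_1\oplus[-\mu]=\mathbf{A}_1\otimes\mathbf{I}_1+\mathbf{I}_k\otimes[-\mu]=\mathbf{A}_1-\mu\,\mathbf{I}_k$, which adds $-\mu$ to every diagonal entry and reproduces the stated $\mathbf{G}$. To confirm the distribution is Coxian I would exhibit its embedded structure: by eq.~\eqref{eq:a} the absorption-rate vector is $-\mathbf{G}\,\mathbf{1}$, whose entries are $\mu$ in rows $1,\ldots,k-1$ and $\lambda_k+\mu$ in the last row, so each transient state $i$ leaves at total rate $\lambda_i+\mu$, branching to the next state with probability $\lambda_i/(\lambda_i+\mu)$ and to absorption with probability $\mu/(\lambda_i+\mu)$ \textendash{} precisely the acyclic single-chain form defining a Coxian distribution.

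For the mean I would invoke the first moment formula in eqs.~\eqref{eq:pt}, namely $\text{E}(Y)=\boldsymbol{\alpha}^\text{T}(-\mathbf{G})^{-1}\mathbf{1}$. Since $\boldsymbol{\alpha}=e_1$, this is the sum of the first row of the Green matrix $\mathbf{U}=(-\mathbf{G})^{-1}$. I would evaluate it via the probabilistic reading of $\mathbf{U}$ stated in the excerpt: starting in state $1$, the expected time before absorption decomposes over states as $\sum_{j=1}^{k}(\text{probability of ever reaching state }j)\,(\text{mean dwell time in state }j)$. Because the embedded jump chain (eqs.~\eqref{eq:AQ}--\eqref{eq:absprob}) is acyclic and advances only to the next state, the reach probability is the product $\prod_{i=1}^{j-1}\lambda_i/(\lambda_i+\mu)$, while the mean dwell time in state $j$ is $1/(\lambda_j+\mu)$; summing these products gives exactly the claimed expression for $\text{E}(Y)$. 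As a cross-check, the explicit inverse of the bidiagonal matrix $-\mathbf{G}$ (whose upper-triangular entries are running products of the factors $\lambda_i/(\lambda_i+\mu)$, generalizing the equal-rate computation displayed in the Corollary) yields the same first-row sum.

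The computations here are short, so the main obstacle is bookkeeping rather than any deep difficulty: getting the Kronecker-sum orientation and indexing right so that the $-\mu$ lands on the diagonal and not the superdiagonal, and \textendash{} for the mean \textendash{} justifying the reach-probability factorization (equivalently, carrying out the bidiagonal inversion cleanly). Once the embedded chain is recognized as a pure forward chain with per-step survival $\lambda_i/(\lambda_i+\mu)$, both the Coxian identification and the telescoping-product form of the mean follow immediately.
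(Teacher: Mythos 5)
Your proposal is correct and follows essentially the same route as the paper's proof: the same phase-type representations of the two ingredients, the same application of the minimum-closure result (eq.~\eqref{eq:minPT}) with the Kronecker sum collapsing to $\mathbf{A}_1-\mu\,\mathbf{I}_k$, and the same reading of $\mathrm{E}(Y)$ as the first-row sum of $(-\mathbf{G})^{-1}$. The only cosmetic difference is that the paper obtains that row sum by displaying the explicit bidiagonal inverse ``by direct inspection,'' whereas you derive it primarily from the reach-probability/dwell-time interpretation of the Green matrix (with the explicit inverse as a cross-check), which is the same fact argued probabilistically.
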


\begin{proof}The generalized Erlang (hypoexponential) and exponential distributions can each be written as phase-type distributions, with parameterization
	
	\begin{equation}\boldsymbol{\alpha_1} = \begin{bmatrix} 1 \\ 0 \\ \vdots \\ 0  \end{bmatrix}  \qquad \text{and} \qquad \mathbf{A_1} = 
	\begin{bmatrix} -\lambda_1 & \lambda_1 & 0 & \cdots & 0 \\
					0 & -\lambda_2 & \lambda_2  & \ddots & 0 \\
				\vdots & \ddots & \ddots  & \ddots & \ddots \\
					0 & 0 & \ddots &  -\lambda_{k-1} & \lambda_{k-1} \\
					0 & 0 & \cdots &  0 & -\lambda_k
	\end{bmatrix} \end{equation}
	
	where $\mathbf{A_1}$ is a $k\times k$ matrix, and \begin{equation} \boldsymbol{\alpha_2}=[1] \qquad \text{and} \qquad \mathbf{A_2}=[-\mu].  \end{equation}

	By eq. \eqref{eq:minPT}, this minimum random variable $Y$ is also phase-type distributed, with parameters 
	\begin{subequations}  \begin{align} \boldsymbol{\alpha_1} \otimes \boldsymbol{\alpha_2} =&\; \boldsymbol{\alpha_1} \otimes [1] = \boldsymbol{\alpha}, \\
	\mathbf{A_1} \oplus \mathbf{A_2} =&\; \mathbf{A_1} \otimes [1] +  \mathbf{I}_{k\times k} \otimes [-\mu] = \mathbf{G}. \end{align} \end{subequations}
	
	By the structure of $\mathbf{G}$, one can be more specific and say that $Y$ has a Coxian distribution.  For this particular case, direct inspection confirms that 
	
	\begin{equation} -\mathbf{G}^{-1} = \begin{bmatrix} \frac1{\lambda_1+\mu} & \frac1{\lambda_2+\mu}\, \frac{\lambda_1}{\lambda_1+\mu} & \frac1{\lambda_3+\mu}\,\frac{\lambda_1}{\lambda_1+\mu}\,\frac{\lambda_2}{\lambda_2+\mu} & \cdots & \frac1{\lambda_k+\mu}\, \prod_{i=1}^{k-1} \frac{\lambda_i}{\lambda_i+\mu}\\
	0 & \frac1{\lambda_2+\mu} & \frac1{\lambda_3+\mu}\,\frac{\lambda_2}{\lambda_2+\mu} & \ddots & \frac1{\lambda_k+\mu}\, \prod_{i=2}^{k-1} \frac{\lambda_i}{\lambda_i+\mu} \\
	\vdots & \ddots & \ddots  & \ddots & \ddots \\
	0 & 0 & \ddots &  \frac1{\lambda_{k-1}+\mu} & \frac1{\lambda_{k}+\mu} \, \frac{\lambda_{k-1}}{\lambda_{k-1} + \mu}  \\
	0 & 0 & \cdots &  0 & \frac1{\lambda_k+\mu}
	\end{bmatrix} \end{equation}
	
	and thus, since $\boldsymbol{\alpha}$ is a 1 followed by $k-1$ zeros, it follows that the expected value of this distribution, E$(Y)=\boldsymbol{\alpha}^\text{T}(-\mathbf{G}^{-1})\mathbf{1}$, is the sum of the entries in the first row of $-\mathbf{G}^{-1}$.
	
\end{proof}

\end{appendices}

\clearpage
\printbibliography

\end{document}